\newcolumntype{L}{>{$}l<{$}} 
\newtheorem{theorem}{Theorem}
\newtheorem{lemma}[theorem]{Lemma}
\newtheorem{corollary}[theorem]{Corollary}
\newtheorem{definition}[theorem]{Definition}
\newtheorem{remark}[theorem]{Remark}
\newtheorem{assumption}[theorem]{Assumption}
\newcommand{\eqref}[1]{(\ref{#1})}
\newcommand{\bd}[1]{\boldsymbol{ #1 }}
\renewcommand\bra[1]{{\langle{#1}|}}
\renewcommand\ket[1]{{|{#1}\rangle}}
\def\RR{\mathbbm{R}}
\def\CC{\mathbbm{C}}
\def\mk {\mathfrak}
\def\Ho{\mathcal{H}_1}
\def\HN{\wedge^N[\Ho]}
\def\mH {\mathcal{H}}
\def\kpsi {\ket{\Psi}}
\def\im {{\rm{Im}}}
\begin{document}

\title[]{Implications of pinned occupation numbers for natural orbital expansions. II: Rigorous derivation and extension to non-fermionic systems}

\author{Tomasz Maci\k{a}\.zek}
\ead{maciazek@cft.edu.pl}
\address{Center for Theoretical Physics, Polish Academy of Sciences, Al.~Lotnik\'ow 32/46, 02-668 Warszawa, Poland}
\address{School of Mathematics, University of Bristol, Bristol BS8 1TW, UK}

\author{Adam Sawicki}
\ead{a.sawicki@cft.edu.pl}
\address{Center for Theoretical Physics, Polish Academy of Sciences, Al.~Lotnik\'ow 32/46, 02-668 Warszawa, Poland}

\author{David Gross}
\address{Institute for Theoretical Physics, University of Cologne, Germany}

\author{Alexandre Lopes}
\address{Carl Zeiss SMT GmbH, Rudolf-Eber-Straße 2, 73447 Oberkochen, Germany}

\author{Christian Schilling}
\address{Department of Physics, Arnold Sommerfeld Center for Theoretical Physics, Ludwig-Maximilians-Universit\"at M\"unchen, Theresienstrasse 37, 80333 M\"unchen, Germany}
\address{Clarendon Laboratory, University of Oxford, Parks Road, Oxford OX1 3PU, United Kingdom}

\date{\today}

\begin{abstract}
We have explained and comprehensively illustrated in Part I that the generalized Pauli constraints suggest a natural extension of the concept of active spaces. In the present Part II, we provide rigorous derivations of the theorems involved therein. This will offer in particular deeper insights into the underlying mathematical structure and will explain why the saturation of generalized Pauli constraints implies a specific simplified structure of the corresponding many-fermion quantum state. Moreover, we extend the results of Part I to non-fermionic multipartite quantum systems, revealing that extremal single-body information has always strong implications for the multipartite quantum state. In that sense, our work also confirms that pinned quantum systems define new physical entities and the presence of pinnings reflect the existence of (possibly hidden) ground state symmetries.
\end{abstract}


\maketitle
\section{Introduction and brief recap of the notation}\label{sec:intro}
We consider the $N$-fermion Hilbert space $\HN$, where $\Ho$ is the underlying $d$-dimensional one-particle Hilbert space. If not stated otherwise, all states in this paper are not necessarily normalised. To each quantum state $\ket{\Psi} \in \HN$ we can assign its one particle reduced density operator $\rho_1$ which is obtained by tracing out all except one fermion,
\begin{equation}\label{1RDO}
\rho_1 \equiv N\,\mbox{Tr}_{N-1}[|\Psi\rangle\!\langle\Psi|] \equiv \sum_{j=1}^d\,n_j\,|j\rangle\! \langle j|\,.
\end{equation}
Equation \eqref{1RDO} gives rise to the \emph{natural occupation numbers} (NONs) $n_j$ and the \emph{natural orbitals} (NOs) $|j\rangle$, the corresponding eigenstates \cite{Loewdin55,DavidsonRev}. This terminology also motivates the normalization $\mbox{Tr}_1[\rho_1]=n_1+\ldots+n_d =N$ which allows us to interpret the eigenvalues of $\rho_1$ as occupation numbers, the occupancies of the natural orbitals. The NOs form an  orthonormal basis $\mathcal{B}_1$ for $\Ho$ which is unique as long as the NONs are non-degenerate.

Including the physically relevant case of degenerate NONs, any such NO basis $\mathcal{B}_1$ induces an orthonormal basis for  $\HN$ given by the family of ${d}\choose{N}$ configuration states $\ket{i_1,\ldots,i_N}\equiv f_{i_1}^\dagger \ldots f_{i_N}^\dagger\ket{0}$, with $1 \leq i_1 < i_2 < \ldots < i_N \leq d$. Here, $\ket{0}$ denotes the vacuum state and $f_j^\dagger$ is the creation operator of a fermion in the NO $\ket{j}$.
Since $\mathcal{B}_N$ is a basis for $\HN$ we can expand every quantum state in $\HN$ uniquely with respect to $\mathcal{B}_N$, in particular also $|\Psi\rangle$ ($\bd{i}\equiv (i_1,\ldots, i_N)$)
\begin{equation}\label{PsiNO}
\ket{\Psi} = \sum_{\bd{i}}\, c_{\bd{i}}\,\ket{\bd{i}}\,.
\end{equation}
The expansion \eqref{PsiNO} is self-consistent in the sense that the coefficients $c_{\bd{i}}$ are such that the corresponding one-particle reduced density operator is diagonal in its own natural orbital basis. 
Actually, in the natural expansion (\ref{PsiNO}) some of the coefficients might be zero. In the following, we will often distinguish this set from those configurations $\bd{i}$ which contribute to the expansion of $\kpsi$ also called $\ket{\Psi}$'s {\it natural support}, $\mathrm{Supp}_{\mathcal{B}_1}(\kpsi)$, of $\kpsi$
\begin{equation}\label{def:support}
\mathrm{Supp}_{\mathcal{B}_1}(\kpsi):=\{\bd{i}:\ \ket{\bd i}\in\mathcal{B}_N{\rm\ and}\ \braket{\Psi|\bd i}\neq0\}.
\end{equation}
Clearly, in case of degenerate NONs, the support of $\kpsi$ may depend on the specific choice $\mathcal{B}_1$ of natural orbitals.

One particular instance of a reduction of natural support is based on the presence of pinning. To recall those main findings of Part I \cite{Pin1}, let us first recall that the set of one-particle density matrices $\rho_1$ corresponding to some $\ket{\Psi} \in \HN$ is described by the generalized Pauli constraints, i.e.~a finite set of affine conditions,
\begin{equation}\label{GPC}
  D_i(\bd{n}) \equiv \kappa_i^{(0)} +  \bd{\kappa}_i \cdot \bd{n} \equiv \kappa_i^{(0)} + \sum_{j=1}^d \kappa_i^{(j)} n_j\geq 0\,,\,\,\,\,i=1,2,\ldots, r_{N,d}<\infty,
\end{equation}
on the vector $\bd{n} \equiv (n_j)_{j=1}^d$ of decreasingly ordered NONs. The crucial result (as illustrated in Part I \cite{Pin1}) which we will rigorously derive in the following is that the saturation of a GPC $D \geq 0$ implies structural simplifications on the corresponding $\ket{\Psi}$. As described by Theorem 6 in Part I \cite{Pin1}, one has
\begin{equation}\label{thm6}
D(\bd{n}) = 0 \quad \Rightarrow \quad \hat{D}_{\mathcal{B}_1}\ket{\Psi}=0\,,
\end{equation}
where $\hat{D}_{\mathcal{B}_1}\equiv D(\hat{n}_1,\ldots, \hat{n}_d)$ is the NO induced operator of the GPC $D$. In the case of degenerate NONs one expects (see Conjecture 9, Theorem 10 and Corollary 11 in Part I \cite{Pin1}) this to be true with respect to at least one specific choice $\mathcal{B}_1$ of NOs.
The structural implications of pinning are particularly well-pronounced in the NO expansion \eqref{PsiNO}: Since the configuration states $\ket{\bd{i}}$ are the eigenstates of $\hat{D}_{\mathcal{B}_1}$, eq.~\eqref{thm6} implies a selection rule on the contributing configurations
\begin{equation}
\forall \bd{i}: \quad D(\bd{n}_{\bd{i}}) = 0 \quad \Rightarrow \quad \bd{i} \in \mathrm{Supp}_{\mathcal{B}_1}(\kpsi)\,,
\end{equation}
where $\bd{n}_{\bd{i}}$ is $\ket{\bd{i}}$'s vector of \emph{unordered} occupation numbers,
\begin{equation}\label{NONconf}
  \bd{n}_{\bd{i}}\equiv \mbox{spec}\big(N \mbox{Tr}_{N-1}[\ket{\bd{i}}\!\bra{\bd{i}}]\big)\,,\quad \mbox{i.e.},\,\left(\bd{n}_{\bd{i}}\right)_j=
 \Bigg\{ \begin{array}{@{\kern2.5pt}lL}
    \hfill 1 & if $j \in \bd{i}$\\
          0 & if $j \not \in \bd{i}$
\end{array}\,.
\end{equation}

\section{Proofs of the main results}\label{sec:tang}
In the following we formalize our approach to deriving the consequences of pinning by generalized Pauli constraints, i.e.~to proving our main results (presented as Theorems 6,10,12 in Part I \cite{Pin1}). In particular, this will allow us to treat all possible scenarios in a systematic way and generalise our findings to multiparticle systems that also consist of particles different than fermions.

Several of our key results will rely on a close investigation of local symmetries of quantum states. The word ``local'' in this context does not refer to spatial locality but rather to a form of locality in underlying mathematical space  $\HN$: In first quantisation which is based on the embedding $\HN \leq \Ho^{\otimes^N}$, an operator $U$ is called local if it can be expressed as $U \equiv u^{\otimes^N}$, where $u$ acts on the underlying one-particle Hilbert space $\Ho$. In second quantization using fermionic creation ($f_i^\dagger$) and annihilation ($f_j$) operators referring to some orthonormal reference basis for $\Ho$, we can express such a local and unitary operator as
\begin{equation}\label{local-action}
U=e^{i\sum_{j,l=1}^d H_{jl} f_j^\dagger f_l}\,,\quad \mbox{with}\quad H_{jl}^\ast =H_{lj}\,.
\end{equation}
A local unitary operator $U$ represents by definition a local symmetry of $\ket{\Psi}$ if
\begin{equation}
U\ket{\Psi}=e^{i\phi}\ket{\Psi}{\rm\ \quad for\ some\ }\phi\in[0,2\pi).
\end{equation}
The group $S_{\ket{\Psi}}$ of local symmetries of $\ket{\Psi}$ can be identified as a subgroup of the group of unitary operators on $\Ho$,
\begin{equation}\label{def:local-symmetry}
S_{\ket{\Psi}}:=\left\{u: \Ho\rightarrow \Ho\,\mbox{unitary}\,\Big|\,\ u^{\otimes^N}\ket{\Psi}=e^{i\phi}\ket{\Psi}{\rm\ for\ some\ }\phi\in[0,2\pi) \right\}\,.
\end{equation}
In analogy to the local symmetry group of $\ket{\Psi}$, we introduce for a one-particle reduced density operators $\rho_1$ its symmetry group as
\begin{equation}\label{def:rho-symmetry}
S_{\rho_1}:=\left\{u: \Ho\rightarrow \Ho\,\mbox{unitary}\,|\ u^\dagger \rho_1 u=\rho_1\right\}\,.
\end{equation}
Since any local transformations $\ket{\Psi}\mapsto u^{\otimes^N}\ket{\Psi}$ act by conjugation on $\rho_1$, $\rho_1\mapsto u^\dagger \rho_1 u$, any local symmetry of $\ket{\Psi}$ represents also a symmetry of $\rho_1$. In other words, we find the important inclusion relation
\begin{equation}\label{symmetries-rel}
S_{\ket{\Psi}}\subset S_{\rho_1}\,.
\end{equation}
Both groups $S_{\ket{\Psi}}$ and $S_{\rho_1}$ are actually Lie groups. Their corresponding Lie algebras $\mk{s}_{\rho_1}$ and $\mk{s}_{\ket{\Psi}}$, respectively, arise as the tangent spaces at the ``point'' $\mathds{1} \in S_{\ket{\Psi}} \subset S_{\rho_1}$ and will play a crucial role in our work.
$\mk{s}_{\rho_1}$ is given as the algebra of all anti-hermitian operators $i h$ on the one-particle Hilbert space $\Ho$ which commute with $\rho_1$,
\begin{equation}\label{LieAlg1}
\mk{s}_{\rho_1}=\{i h\in \mk{u}(d):\ [\rho_1,h]=0\}\,.
\end{equation}
The Lie algebra $\mk{s}_{\ket{\Psi}}$ of $S_{\ket{\Psi}}$ forms then a subalgebra of $\mk{s}_{\rho_1}$, given by
\begin{equation}\label{LieAlgN}
\mk{s}_{\ket{\Psi}}=\{i h\in \mk{u}(d):\ \sum_{j,l=1}^d h_{jl}f_j^\dagger f_l \ket{\Psi}= \lambda \ket{\Psi},\,\mbox{for some}\,\lambda \in \RR\} \,\subset\, \mk{s}_{\rho_1}\,.
\end{equation}
Here, $\mk{u}(d)$ denotes the Lie algebra of the Lie group of unitary operators on $\Ho$, i.e.~$\mk{u}(d)$ is the algebra of anti-hermitian operators on $\Ho$ and $h_{jl}\equiv\bra{j}h\ket{l}$. To verify \eqref{LieAlg1}, recall that any unitary operator $u$ on $\Ho$ can be expressed as $u= e^{i h}$ for some hermitian operator $h$ and that the generators of a Lie group follow as the derivatives $\frac{\rm d u}{{\rm d} t}(t=0)$ of any one-parametric curve $u(t)\equiv e^{i t h}$.

Furthermore, let us denote by $\mu$ the map which assigns to a state $\ket{\Psi}\in \HN$ its one-particle reduced density operator
\begin{equation}\label{mu}
\mu:\ \ket{\Psi}\mapsto \rho_1.
\end{equation}
Formally, $\mu$ can be viewed as a map acting between two vector spaces where the target space is the space of hermitian $d\times d$ matrices. We identify the target space with the Lie algebra of the group $U(d)$. In other words, $\mu:\ \HN\to \iota\mk{u}(d)$. Note that in fact the image of $\mu$ is not all of $\iota \mk{u}(d)$, but it consists of positive-semidefinite matrices whose trace is equal to $N$. In this section, we will focus on regularity properties of map $\mu$. Let us next explain what we mean by regularity. Consider a one-parameter family of states (a curve) $\ket{\Psi(t)},\ t\in [-\frac{1}{2},\frac{1}{2}]$. This family gives rise to a one-parameter family of one-particle reduced density operators given by $\mu\left(\ket{\Psi(t)}\right)$. Consider next the velocity vector associated with curve $\ket{\Psi(t)}$ given by the derivative
\begin{equation}
\ket{\dot\Psi(0)}:=\frac{d}{dt}\Big{|}_{t=0}\ket{\Psi(t)}.
\end{equation}
By considering all possible curves that go through a common point $\ket{\Psi_0}$ and their velocities at $t=0$, we obtain a vector space which is the same as $\HN$. Consequently, we consider the time-derivative of the corresponding curve in the space of one-particle reduced density operators, i.e.
\begin{equation}
\dot\rho_1(0):=\frac{d}{dt}\Big{|}_{t=0}\mu\left(\ket{\Psi(t)}\right).
\end{equation}
By the chain rule, the result $\dot\rho_1(0)$ depends linearly on $\ket{\dot\Psi(0)}$ and the matrix that transforms one vector to another depends only on $\ket{\Psi_0}$ and is called the derivative matrix, $d\mu_{\ket{\Psi_0}}$. In other words,
\begin{equation}
\dot\rho_1(0):=d\mu_{\ket{\Psi_0}}\ket{\dot\Psi(0)}.
\end{equation}
The rank of the linear operator $d\mu_{\ket{\Psi_0}}$ tells us how many directions we can cover within $\iota \mk{u}(d)$ by taking all curves that go through $\ket{\Psi_0}$. Intuitively, if we are in a generic situation where the NONs of $\ket{\Psi_0}$ do not saturate any of the GPCs, the rank of $d\mu_{\ket{\Psi_0}}$ is maximal and thus equal to $d^2-1$. However, to properly justify this assertion, one has to invoke the {\it principal orbit type theorem}, a fact which is covered by Theorem \ref{stratification-properties}.

The importance of the operator $d\mu$ becomes evident when one considers pinned states, i.e.\ states whose NONs saturate at least one of the GPCs. From the sole fact that GPCs cannot be broken by a pure state, we have that the operator $d\mu$ cannot be of maximal rank when evaluated at such pinned states. In particular, the derivative vector $\dot\rho_1(0)$ cannot point out of the region of admissible one-particle reduced density operators given by the GPCs. Importantly, this phenomenon imposes tremendous restrictions on the structure of pinned quantum states as already discussed and illustrated in Part I \cite{Pin1}. The following lemma will be the point of departure for the results presented in this section. This is a variant of a well-known result which is valid in a much more general setting \cite{Guillemin} (see also \cite{SWK13, MOS13} where it was first used to study the structure of qubit states that saturate GPCs). Nevertheless, we reprove it here using more elementary arguments.

\begin{lemma}\label{thm:image}
Let $\mu:\ \HN\to i\mk{u}(d)$ be the map that assigns to a pure quantum state its one-particle reduced density operator and let $d\mu_{\ket{\Psi}}:\ \HN\to i\mk{u}(d)$ be the derivative of $\mu$.

\noindent We have
\begin{equation}\label{eq:image}
\im\ d\mu_{\ket{\Psi}}=i\left({\mk s}_{\ket{\Psi}}\right)^\perp,
\end{equation}
where
\begin{equation}\left({\mk s}_{\ket{\Psi}}\right)^\perp=\left\{ih\in \mk{u}(d):\ \tr(hh')=0{\rm\ for\ all\ }ih'\in {\mk s}_{\ket{\Psi}}\right\}.
\end{equation}
\end{lemma}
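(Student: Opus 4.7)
My plan is to compute $d\mu_{\ket{\Psi}}$ directly from the definition, pair it against a Hermitian test operator via the trace inner product, and then read off both inclusions from a single bilinear identity combined with the general fact $\im L=(\ker L^*)^\perp$ for a real-linear map between finite-dimensional real inner-product spaces.

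The Leibniz rule applied to $\mu(\ket{\Psi(t)})=N\,\mathrm{Tr}_{N-1}[\ket{\Psi(t)}\!\bra{\Psi(t)}]$ gives
\[
d\mu_{\ket{\Psi}}(\ket{\Phi})=N\,\mathrm{Tr}_{N-1}\!\bigl[\ket{\Phi}\!\bra{\Psi}+\ket{\Psi}\!\bra{\Phi}\bigr].
\]
Denoting by $\hat h=\sum_{j,l=1}^d h_{jl}f_j^\dagger f_l$ the second quantisation of a Hermitian $h$ on $\Ho$, and using the standard identity $\mathrm{Tr}(\rho_1 h)=\bra{\Psi}\hat h\ket{\Psi}$, I obtain the key pairing formula
\[
\mathrm{Tr}\!\bigl(d\mu_{\ket{\Psi}}(\ket{\Phi})\,h\bigr)=2\,\mathrm{Re}\bra{\Phi}\hat h\ket{\Psi}.
\]
Viewing $\HN$ as a real inner-product space with $\langle\cdot,\cdot\rangle=\mathrm{Re}\!\braket{\cdot|\cdot}$ and $i\mk u(d)$ with the trace form, this formula identifies $d\mu_{\ket{\Psi}}$ as (twice) the adjoint of the linear map $h\mapsto \hat h\ket{\Psi}$.

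The inclusion $\im d\mu_{\ket{\Psi}}\subseteq i(\mk s_{\ket{\Psi}})^\perp$ then follows directly: if $ih'\in\mk s_{\ket{\Psi}}$, eq.~\eqref{LieAlgN} gives $\hat h'\ket{\Psi}=\lambda\ket{\Psi}$ for some $\lambda\in\RR$, so the pairing formula reduces to $\mathrm{Tr}(d\mu_{\ket{\Psi}}(\ket{\Phi})\,h')=2\lambda\,\mathrm{Re}\braket{\Phi|\Psi}$, which vanishes on tangent vectors $\ket{\Phi}$ to the norm sphere at $\ket{\Psi}$. For the reverse inclusion I invoke $\im L=(\ker L^*)^\perp$ applied to $L=d\mu_{\ket{\Psi}}$ restricted to $V_{\ket{\Psi}}=\{\ket{\Phi}:\mathrm{Re}\braket{\Phi|\Psi}=0\}$. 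The pairing formula identifies $\ker L^*$ as the set of Hermitian $h$ for which $\hat h\ket{\Psi}$ is real-orthogonal to every $\ket{\Phi}\in V_{\ket{\Psi}}$, equivalently a real multiple of $\ket{\Psi}$. Since $\bra{\Psi}\hat h\ket{\Psi}$ is automatically real (as $\hat h$ is Hermitian), this forces $\hat h\ket{\Psi}=\lambda\ket{\Psi}$ with $\lambda\in\RR$, which is precisely the defining condition of $\mk s_{\ket{\Psi}}$.

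The most delicate point I expect is the normalisation bookkeeping. Unrestricted on $\HN$, $d\mu_{\ket{\Psi}}(\ket{\Psi})=2\rho_1$ is generically not in $i(\mk s_{\ket{\Psi}})^\perp$---pairing against $h'=\mathds{1}\in\mk s_{\ket{\Psi}}$ gives $\mathrm{Tr}\rho_1=N\neq 0$. The natural fix, already implicit above, is to restrict $\mu$ to the norm sphere (equivalently, to quotient out the radial $U(1)\times\RR_{>0}$ gauge of $\ket{\Psi}$). On this restriction the spurious radial direction is excised, the kernel of the adjoint is exactly $\mk s_{\ket{\Psi}}$, and the duality $\im L=(\ker L^*)^\perp$ produces the stated equality at once.
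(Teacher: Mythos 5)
Your proof is correct and takes essentially the same route as the paper's: both reduce the claim to the trace pairing of $d\mu_{\ket{\Psi}}$ against a one-particle test operator (your $2\,\mathrm{Re}\bra{\Phi}\hat h\ket{\Psi}$ is the paper's $\tr\left(\ket{\Psi}\bra{\Psi}[A,X]\right)$ specialised to $\ket{\Phi}=A\ket{\Psi}$) and then conclude by orthogonality duality for the non-degenerate trace form, your appeal to $\im L=(\ker L^{*})^{\perp}$ being just a repackaging of the paper's two hand-proved inclusions. Your explicit handling of the radial/normalisation direction is in fact more careful than the paper's, which dismisses it by asserting that $\mu$ is constant along the complex line through $\ket{\Psi}$ --- a statement that holds only for the normalised density operator, exactly the point your restriction to the norm sphere makes precise.
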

\begin{proof}
Recall the definition of the derivative. Any element of the domain of $d\mu$ can be represented as a differentiable curve $\ket{\Psi(t)}$ such that $\ket{\Psi(0)}=\kpsi$. For such a curve, $\ket{\dot\Psi(0)}\in \HN$. The map $d\mu_{\ket{\Psi}}$ acts on $\ket{\dot\Psi(0)}$ in the following way:
\begin{equation}\label{dmu-definiton}
d\mu_{\ket{\Psi}}\left(\ket{\dot\Psi(0)}\right)=\frac{d}{dt}\bigg{|}_{t=0}\mu\left(\ket{\Psi(t)}\right).
\end{equation}
Note first, that because $\mu$ does not change along the complex line through $\kpsi$, the image of $d\mu_{\kpsi}$ is invariant under complex scaling, i.e
\[\im\ d\mu_{\ket{\Psi}}=\im\ d\mu_{c\ket{\Psi}}\ {\rm for\ all}\ c\in\CC-\{0\}.\]
  Moreover, vectors proportional to $\kpsi$ belong to the kernel of $d\mu_{\kpsi}$. Hence, in order to find $\im\ d\mu_{\ket{\Psi}}$, it is enough to consider only tangent vectors corresponding to curves of the form $e^{tA}\kpsi$ for $A\in \mk{u}(\HN)$ (note, that this is the global unitary algebra).

 Our goal is to prove equation (\ref{eq:image}), which is equivalent to the fact that $\tr\left(d\mu_{\ket{\Psi}}\left(A\kpsi\right){X}\right)=0$ for all $A\in\mk{u}(\mH)$ if and only if $X\in{\mk s}_{\ket{\Psi}}$, i.e. $\left[X,\ket{\Psi}\bra{\Psi}\right]=0$. We first use the definition of $d\mu$ (formula (\ref{dmu-definiton}))
 \begin{equation}
 \fl \tr\left(d\mu_{\ket{\Psi}}\left(\ket{\dot\Psi(0)}\right){\iota X}\right)=\frac{d}{dt}\bigg{|}_{t=0} \tr\left(\mu\left(\ket{\Psi(t)}\right){\iota X}\right)=\frac{d}{dt}\bigg{|}_{t=0} \tr\left(\rho_1(\ket{\Psi(t)}) \iota X\right).
 \end{equation}
 By choosing $\ket{\Psi(t)}=e^{tA}\kpsi$, we get, that
  \begin{equation}
 \fl \tr\left(d\mu_{\ket{\Psi}}\left(A\kpsi\right){\iota X}\right)=\frac{d}{dt}\bigg{|}_{t=0} \tr\left(\rho_1\left(e^{At}\ket{\Psi}\right) \iota X\right)=\frac{d}{dt}\bigg{|}_{t=0} \tr\left(e^{-At}\left(\ket{\Psi}\bra{\Psi}\right)e^{At} \iota X\right).
 \end{equation}
In the last step, we used the fact that $\tr(\rho_1(\kpsi)X)=\tr(\ket{\Psi}\bra{\Psi}X)$ for any $X$ that acts locally. By computing the derivative and doing a cyclic permutation of matrices under the trace, we finally obtain
  \begin{equation}\label{eq:dmu}
   \tr\left(d\mu_{\ket{\Psi}}\left(A\kpsi\right){X}\right)=\tr\left(\ket{\Psi}\bra{\Psi} [A,X]\right),\ X\in \iota \mk{u}(d),\ A\in\mk{u}(\mH).
  \end{equation}

Let us first show that $\im\ d\mu_{\ket{\Psi}}\subset \left({\mk s}_{\ket{\Psi}}\right)^\perp$. To this end, assume that $X\in{\mk s}_{\ket{\Psi}}$, i.e. $\left[X,\ket{\Psi}\bra{\Psi}\right]=0$. Then for any $A\in\mk{u}(\mH)$,
\begin{equation}
\tr\left(\ket{\Psi}\bra{\Psi} [A,X]\right)=\tr\left(\ket{\Psi}\bra{\Psi}AX\right)-\tr\left(\ket{\Psi}\bra{\Psi}XA\right)=0.
\end{equation}
By doing a cyclic permutation under the first trace and commuting $X$ with $\ket{\Psi}\bra{\Psi}$ under the second trace, we get that the above expression vanishes.

Finally, we show that $\im\ d\mu_{\ket{\Psi}}\supset \left({\mk s}_{\ket{\Psi}}\right)^\perp$. To this end, we assume that there exists $X\in\iota\mk{u}(d)$ such, that for all $A\in\mk{u}(\mH)$, $\tr\left(\ket{\Psi}\bra{\Psi} [A,X]\right)=0$. By doing a cyclic permutation of matrices under the trace, we get
\begin{equation}\label{eq:proof2}
\tr\left(\ket{\Psi}\bra{\Psi} [A,X]\right)=\tr\left([X,\ket{\Psi}\bra{\Psi}] A\right).
 \end{equation}
The above trace is a non-degenerate scalar product on $\mk{u}(\mH)$. Hence by considering $\iota[X,\ket{\Psi}\bra{\Psi}]$ as an element of $\mk{u}(\mH)$, we get, that expression (\ref{eq:proof2}) vanishes for all $A\in\mk{u}(\mH)$ if and only if $[X,\ket{\Psi}\bra{\Psi}]=0$.
\end{proof}

\subsection{Non-degenerate occupation numbers}\label{subsec:tang}
Let us first cover the simpler case where the NONs are assumed to be non-degenerate, i.e. $n_1>n_2>\dots>n_d$. This case is straightforward to analyse due to the fact that a state with non-degenerate NONs has a unique basis of natural orbitals. Moreover, all local symmetry operators, $S_{\kpsi}$, are diagonal in the basis of NOs of $\kpsi$. To see this, note first that if NONs are non-degenerate, then the symmetry group of the corresponding diagonal one-particle reduced density matrices consists only of diagonal matrices, i.e.
\begin{equation}
S_{\bd{n}}=\left\{e^{i\sum_{j=1}^d\phi_j}\ket{j}\!\bra{j}:\ \phi_j\in[0,2\pi]\right\}.
\end{equation}
Finally, recall equation (\ref{symmetries-rel}) which asserts that $S_{\kpsi}$ is necessarily contained in $S_{\bd{n}}$. Recall that if a hermitian matrix $h=\sum_{k,l=1}^d H_{k,l}\ket{k}\!\bra{l}$ is a generator of local symmetry of $\kpsi$ then
\begin{equation}
\left(\sum_{k,l=1}^d H_{k,l}f^\dagger_k f_l\right)\kpsi=\lambda\kpsi\, {\mathrm{for\ some\ }}\lambda\in\RR.
\end{equation}
Moreover any generator of a symmetry of $\kpsi$ which is diagonal in NO-basis $\mathcal{B}_1$ can be written as $\hat L_{\mathcal{B}_1}$ for some linear functional $L=\sum_{i=1}^dl_in_i$. Such an operator $\hat L_{\mathcal{B}_1}$ acts on $\kpsi$ expanded in its NO-basis in the following way, as we have already stated in eq.~(13) in Part I \cite{Pin1}.
\begin{equation}\label{eq:diag-generator}
\hat L_{\mathcal{B}_1}\kpsi=\sum_{\bd{i}}c_{\bd{i}}\hat L_\psi\ket{\bd{i}}=\sum_{\bd{i}}c_{\bd{i}}(\bd{l}\cdotp\bd{n}_{\bd{i}})\ket{\bd{i}}.
\end{equation}
There, we introduced (recall also Section 2.3 of Part I \cite{Pin1}) for each configuration state $\ket{\bd{i}}$ the respective vector $\bd{n}_{\bd{i}}$ of \emph{unordered} occupation numbers,
\begin{equation}\label{NONconf}
  \bd{n}_{\bd{i}}\equiv \mbox{spec}\big(N \mbox{Tr}_{N-1}[\ket{\bd{i}}\!\bra{\bd{i}}]\big)\,,\quad \mbox{i.e.},\,\left(\bd{n}_{\bd{i}}\right)_j=
 \Bigg\{ \begin{array}{@{\kern2.5pt}lL}
    \hfill 1 & if $j \in \bd{i}$\\
          0 & if $j \not \in \bd{i}$
\end{array}\,,
\end{equation}
Hence, we obtain that $\hat L_{\mathcal{B}_1}=\sum_{i=1}^d l_i\hat n_i$ is a generator of local symmetries of $\kpsi$, i.e. $\hat L_{\mathcal{B}_1}\in {\mk s}_{\ket{\Psi}}$ if and only if there exists $\lambda\in\RR$ such that for all $\bd{i}$ that belong to $\mathrm{Supp}_{\mathcal{B}_1}(\kpsi)$ we have
\begin{equation}\label{stabiliser-cond1}
\bd{l}\cdotp\bd{n}_{\bd{i}}=\lambda.
\end{equation}
In order to find $\lambda$, we multiply both sides of (\ref{stabiliser-cond1}) by $|c_{\bd i}|^2$ and take the sum over $\bd{i}$. Using eq.~(12) from Part I \cite{Pin1} and the fact that $|c_{\bd{i}}|^2$ sum up to one, we obtain that $\lambda=\bd{l}\cdotp\bd{n}$. Hence, the condition (\ref{stabiliser-cond1}) can be conveniently rephrased as
\begin{equation}\label{stabiliser-cond2}
\bd{l}\cdotp(\bd{n}_{\bd{i}}-\bd{n})=0 {\mathrm{\ for\ all\ }}\bd{i}\in\mathrm{Supp}_{\mathcal{B}_1}(\kpsi).
\end{equation}
Summing up,
\begin{equation}\label{nondegen-stabiliser}
{\mk s}_{\ket{\Psi}}=\left\{i\sum_{j=1}^d l_i\ket{j}\!\bra{j}:\ \bd{l}\cdotp(\bd{n}_{\bd{i}}-\bd{n})=0{\rm\ for\ all\ }{\bd{i}}\in\mathrm{Supp}_{\mathcal{B}_1}(\kpsi)\right\}.
\end{equation}
From now on, we will use the shorthand notation
\begin{equation}\label{stab-notation}
\hat l_{\mathcal{B}_1}:=i\sum_{j=1}^d l_j\ket{j}\!\bra{j}.
\end{equation}
So far, we have not assumed that the NON vector is pinned to a GPC. The above results apply for any state whose NONs are non-degenerate. In particular, the support of a generic state consists of all configurations from $\mathcal{B}_N$, hence the conditions (\ref{stabiliser-cond2}) leave very little freedom for choosing the vector $\bd{l}$. In fact, generically there is only one solution, namely $\bd{l}=(1,1,\dots,1)$ which corresponds to the total particle number operator. However, as we shall see in the remaining part of this subsection, if NONs saturate a GPC, the corresponding state is necessarily not generic and has more symmetries.

Theorem \ref{thm:image} connects the above considerations with the selection rule. On the one hand, if $\kpsi$ saturates a GPC, i.e. $D_k(\bd n)=\kappa^{(0)}_k+\bd{\kappa}_k\cdotp \bd{n}=0$, then $\im\ d\mu_{\ket{\Psi}}$ cannot contain directions that have a component perpendicular to the corresponding face of polytope $\mathcal{P}$ (in the sense of the trace product). In other words, we have that
\begin{equation}\label{image-condition}
{\rm If\ } h\in \im\ d\mu_{\ket{\Psi}},{\rm\ then}\ \tr\left(h\left(\sum_{j=1}^d\kappa_k ^{(j)}\ket{j}\!\bra{j}\right)\right)=0.
\end{equation}
On the other hand, theorem \ref{thm:image} and formula (\ref{nondegen-stabiliser}) for the generators of local symmetries of $\kpsi$ tell us that $\im\ d\mu_{\ket{\Psi}}$ is an orthogonal sum of two spaces. One is the space of all hermitian purely off-diagonal matrices which we denote by $\mk{d}^\perp$
\begin{equation}
\mk{d}^\perp:=\{h:\ h^\dagger=h{\rm\ and\ }h_{ii}=0{\rm\ for\ }i=1,\dots,d\}.
\end{equation}
The other space is the space of diagonal matrices which are perpendicular to $i{\mk s}_{\ket{\Psi}}$. In other words,
\begin{equation}\label{image-decomp}
\im\ d\mu_{\ket{\Psi}}=\mk{d}^\perp\oplus{\rm Span}\left\{\sum_{j=1}^d \left((\bd{n}_{\bd{i}})_j-n_j\right)\ket{j}\!\bra{j}:\ {\bd{i}}\in\mathrm{Supp}_{\mathcal{B}_1}(\kpsi)\right\}.
\end{equation}
By comparing formula (\ref{image-decomp}) and (\ref{image-condition}), we obtain that if $\bd{i}\in\mathrm{Supp}_{\mathcal{B}_1}(\kpsi)$, then
\begin{equation}
\tr\left(\left(\sum_{j=1}^d \left((\bd{n}_{\bd{i}})_j-n_j\right)\ket{j}\!\bra{j}\right)\left(\sum_{j'=1}^d\kappa_k ^{(j')}\ket{j'}\!\bra{j'}\right)\right)=0
\end{equation}
Expanding the above formula, we have the following geometric condition for vectors $\bd{n}_{\bd{i}}$ and $\bd{\kappa}_k$
\begin{equation}
(\bd{n}_{\bd{i}}-\bd{n})\cdotp \bd{\kappa}_k=0.
\end{equation}
Noting that $\bd{n}\cdotp \bd{\kappa}_k=-\kappa_k^{(0)}$, the above result can be reformulated as follows.
\begin{theorem}\label{nondeg}
If $\ket{\Psi}\in \HN$ is such that its NONs are non-degenerate and saturate a fixed GPC $D_k$, then
\begin{equation}
\mathrm{Supp}_{\mathcal{B}_1}(\kpsi)\subset\{\bd{i}:\ \ket{\bd{i}}\in\mathcal{B}_N{\rm\ and}\ D_k(\bd{n}_i)=0\}.
\end{equation}
\end{theorem}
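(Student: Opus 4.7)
The plan is to combine Lemma~\ref{thm:image} with the elementary geometric fact that a pinned NON vector lies on the boundary face $\{D_k=0\}$ of the GPC polytope, so the image of $d\mu_{\ket{\Psi}}$ cannot contain any component moving $\bd{n}$ transversally to that face. Concretely, since $D_k$ is affine and stays non-negative on all admissible one-particle reduced density operators while vanishing at $\rho_1$, differentiating $D_k(\bd{n}(t))$ along any curve $\mu(\ket{\Psi(t)})$ through $\rho_1$ must yield zero — non-degeneracy of the NONs keeps the spectrum a smooth function of $t$, so this derivative is simply $\bd{\kappa}_k\cdot\dot{\bd{n}}(0)$. Linearity of $\im\ d\mu_{\ket{\Psi}}$ (if $A\ket{\Psi}$ realises the tangent direction $h$ then $-A\ket{\Psi}$ realises $-h$) collapses the one-sided inequality to the two-sided equality $\tr\!\bigl(h\,\sum_j\kappa_k^{(j)}\ket{j}\!\bra{j}\bigr)=0$ for every $h\in\im\ d\mu_{\ket{\Psi}}$.

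Next I would identify $\mk{s}_{\ket{\Psi}}$ explicitly. Because the NONs are non-degenerate, $S_{\rho_1}$ is exactly the diagonal torus in the NO basis $\mathcal{B}_1$, and the inclusion~\eqref{symmetries-rel} forces every generator of $\mk{s}_{\ket{\Psi}}$ to be of the form $\hat l_{\mathcal{B}_1}=i\sum_j l_j\ket{j}\!\bra{j}$. Acting on the natural-orbital expansion~\eqref{PsiNO} gives $\hat l_{\mathcal{B}_1}\ket{\Psi}=\sum_{\bd{i}}c_{\bd{i}}(\bd{l}\cdot\bd{n}_{\bd{i}})\ket{\bd{i}}$, so requiring this to be a scalar multiple of $\ket{\Psi}$ yields the characterisation~\eqref{nondegen-stabiliser}: $\hat l_{\mathcal{B}_1}\in\mk{s}_{\ket{\Psi}}$ iff $\bd{l}\cdot(\bd{n}_{\bd{i}}-\bd{n})=0$ for every $\bd{i}\in\mathrm{Supp}_{\mathcal{B}_1}(\ket{\Psi})$, where the common eigenvalue $\bd{l}\cdot\bd{n}$ is fixed by the normalisation $\sum_{\bd{i}}|c_{\bd{i}}|^2\bd{n}_{\bd{i}}=\bd{n}$.

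Lemma~\ref{thm:image} then identifies $\im\ d\mu_{\ket{\Psi}}$ with $i\,(\mk{s}_{\ket{\Psi}})^\perp$, which in the trace inner product splits as $\mk{d}^\perp\oplus\mathrm{Span}\bigl\{\sum_j((\bd{n}_{\bd{i}})_j-n_j)\ket{j}\!\bra{j}:\bd{i}\in\mathrm{Supp}_{\mathcal{B}_1}(\ket{\Psi})\bigr\}$, since the orthogonal complement inside the diagonal subalgebra of the set of admissible $\bd{l}$ is exactly the span of the vectors $\bd{n}_{\bd{i}}-\bd{n}$. The constraint from the first paragraph tests only the diagonal summand (as $\sum_j\kappa_k^{(j)}\ket{j}\!\bra{j}$ is diagonal), delivering $(\bd{n}_{\bd{i}}-\bd{n})\cdot\bd{\kappa}_k=0$ for each $\bd{i}$ in the support. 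Combining with the pinning identity $\bd{n}\cdot\bd{\kappa}_k=-\kappa_k^{(0)}$ gives $D_k(\bd{n}_{\bd{i}})=0$, which is the claimed selection rule.

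I expect the main conceptual hurdle to be the opening step — ensuring that pinning really collapses to a \emph{strict} orthogonality of $\im\ d\mu_{\ket{\Psi}}$ against the face normal $\bd{\kappa}_k$, rather than merely to a half-space condition. Linearity of $\im\ d\mu_{\ket{\Psi}}$ rescues us in the non-degenerate case, but in the broader (degenerate) setting treated later in the paper this point requires more care because the sorted spectrum is then only piecewise-smooth in $t$. After that step, the remainder is direct linear algebra inside the diagonal torus.
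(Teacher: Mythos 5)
Your argument is correct and follows essentially the same route as the paper: use the inclusion $S_{\ket{\Psi}}\subset S_{\rho_1}$ to show that for non-degenerate NONs all symmetry generators are diagonal, characterise $\mk{s}_{\ket{\Psi}}$ via $\bd{l}\cdot(\bd{n}_{\bd{i}}-\bd{n})=0$ on the support, invoke Lemma~\ref{thm:image} to decompose $\im\ d\mu_{\ket{\Psi}}$, and intersect with the orthogonality constraint \eqref{image-condition} coming from pinning. Your opening paragraph merely makes explicit (via smoothness of the non-degenerate spectrum and the first-order condition at a constrained minimum) a step the paper states without proof, so the two proofs coincide in substance.
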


\subsection{Possibly degenerate occupation numbers}\label{subsec:tangdeg}
As a preliminary point to this subsection we start with an important result which can be viewed as a converse selection rule. Namely, for a fixed GPC, $D_k$, we start from an {\it ansatz} space $\mathcal{A}_k$ which is spanned by configurations of some one-particle orthonormal basis that saturate $D_k$:
\begin{equation}\label{def:ansatz}
\mathcal{A}_k:={\rm Span}\left\{\ket{\bd{i}}:\ D_k(\bd{n}_{\bd i})=0\right\}.
\end{equation}
Note that the one-particle reduced density operator of a generic $\kpsi\in\mathcal{A}_k$ is not necessarily diagonal and it is {\it a priori} not obvious that its NONs can saturate $D_k$ as well. The following theorem asserts that NONs of a generic state actually do saturate $D_k$. However, in order to achieve that, one has to relax their ordering constraints.
\begin{theorem}[Converse selection rule]
Let us fix a GPC, $D_k$, and its corresponding ansatz space $\mathcal{A}_k$ as defined in (\ref{def:ansatz}). For any $\kpsi\in \mathcal{A}_k$ there exists $\mathcal{B}_1'$, an orthonormal basis of natural orbitals $\{\ket{j}\}_{j=1}^d$, such that $\{\ket{{\bd i}}:\ {\bd i}\in\mathrm{Supp}_{\mathcal{B}_1'}(\kpsi)\}\subset\mathcal{A}_k$, i.e.\ the corresponding vector of NONs saturates $D_k$. Moreover, the NONs can be ordered so that $n_i\geq n_j$ whenever $\kappa_k^{(i)}=\kappa_k^{(j)}$ for $i>j$.
\end{theorem}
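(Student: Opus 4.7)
The plan is to convert the ansatz-space condition into a spectral condition on the 1-RDM and then diagonalise simultaneously. First observe that each generator $\ket{\bd i}$ of $\mathcal{A}_k$ is an eigenvector of $\hat D_k \equiv \kappa_k^{(0)} + \sum_j \kappa_k^{(j)} f_j^\dagger f_j$ with eigenvalue $D_k(\bd n_{\bd i}) = 0$, so $\mathcal{A}_k = \ker \hat D_k$. Equivalently, any $\kpsi\in\mathcal{A}_k$ is an eigenvector of the second-quantised one-body operator $\hat K_k \equiv \sum_j \kappa_k^{(j)} f_j^\dagger f_j$ at eigenvalue $-\kappa_k^{(0)}$. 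This places $\kpsi$ in the spectral framework where a one-body observable acts as a scalar.

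The crucial step would be to deduce from this that $[\rho_1, K_k] = 0$, where $K_k \equiv \sum_j \kappa_k^{(j)} \ket{j}\!\bra{j}$ is the underlying one-particle operator. For any Hermitian $B:\Ho\to\Ho$ with second quantisation $\hat B \equiv \sum_{j,l} B_{jl} f_j^\dagger f_l$, the identity $[\hat K_k,\hat B] = \widehat{[K_k,B]}$ combined with $\hat K_k \kpsi = -\kappa_k^{(0)}\kpsi$ yields
\[
0 = \bra{\Psi}[\hat K_k,\hat B]\ket{\Psi} = \mathrm{Tr}\bigl(\rho_1 [K_k,B]\bigr) = \mathrm{Tr}\bigl([\rho_1,K_k]B\bigr),
\]
and the non-degeneracy of the trace pairing on $i\mk{u}(d)$ forces $[\rho_1,K_k]=0$.

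With $\rho_1$ and $K_k$ commuting, I would simultaneously diagonalise them: decomposing $\Ho = \bigoplus_\alpha V_\alpha$ into the eigenspaces of $K_k$, $\rho_1$ preserves each $V_\alpha$ and can be diagonalised block by block. The collected eigenvectors form an orthonormal basis $\mathcal{B}_1'$ whose elements are simultaneously natural orbitals of $\rho_1$ and eigenvectors of $K_k$. I then label $\mathcal{B}_1'$ so that its $j$-th vector lies in the $K_k$-eigenspace with eigenvalue $\kappa_k^{(j)}$, which is always possible because the multiset of eigenvalues of $K_k$ equals $\{\kappa_k^{(j)}\}_{j=1}^d$. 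With this labelling, $\hat D_k$ re-expressed via the creation and annihilation operators of $\mathcal{B}_1'$ again takes the form $\kappa_k^{(0)} + \sum_j \kappa_k^{(j)} (f_j')^\dagger f_j'$, so its eigenvalue on any $\mathcal{B}_1'$-configuration $\ket{\bd i}$ is precisely $D_k(\bd n_{\bd i})$. Since $\hat D_k \kpsi = 0$, the natural expansion of $\kpsi$ in $\mathcal{B}_1'$ is supported only on those $\bd i$ with $D_k(\bd n_{\bd i}) = 0$, which is the first claim. For the ordering assertion, within each block of indices sharing the same value of $\kappa_k^{(j)}$ one may freely permute the labels without altering $\hat D_k$, and this residual freedom is exactly what I would use to arrange the corresponding NONs in the prescribed order.

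The only substantive step is the commutator identity $[\rho_1,K_k]=0$, and that reduces to the short trace calculation above. Everything else is standard simultaneous diagonalisation of two commuting Hermitian operators combined with a compatible labelling choice, so I do not anticipate any deeper obstacle beyond carefully matching the new natural-orbital labels to the coefficients $\kappa_k^{(j)}$.
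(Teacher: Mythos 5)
Your proof is correct and follows essentially the same route as the paper: both arguments establish $[\rho_1,K_k]=0$ from the fact that $\kpsi$ is an eigenvector of $\hat K_k$, then diagonalise $\rho_1$ block-by-block within the eigenspaces of $K_k$ and use the residual permutation freedom for the ordering claim. The only cosmetic difference is that you derive the commutation relation by a direct trace computation, whereas the paper invokes the general inclusion $S_{\kpsi}\subset S_{\rho_1}$ of equation \eqref{symmetries-rel}; these are equivalent.
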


\begin{proof}
We first show that for any $\kpsi\in \mathcal{A}_k$ the one-particle reduced operator has a block-diagonal form. Namely, for $D_k(\bd n)=\kappa^{(0)}_k+\bd{\kappa}_k\cdotp \bd{n}=0$ we have that
\begin{equation}\label{assertion0}
(\rho_1)_{ij}=0\mathrm{\ if\ } i,j\mathrm{\ are\ such\ that\ }\kappa_k^{(i)}\neq\kappa_k^{(j)},\ i,j>0.
\end{equation}
This in particular means that if all coefficients of vector $\bd{\kappa}_k$ are distinct, then $\rho_1$ is automatically diagonal. To see this, recall that operator $\hat d_k=\sum_j \kappa_k^{(j)}\ket{j}\!\bra{j}$ is a generator local symmetry of $\kpsi$ for any $\kpsi\in\mathcal{A}_k$. By relation (\ref{symmetries-rel}), this automatially implies that $[\rho_1(\kpsi),\hat d_k=0$. Because $\hat d_k$ is diagonal, $\rho_1(\kpsi)$ must be block diagonal as in (\ref{assertion0}).

Furthermore, in order to find NOs of $\kpsi\in\mathcal{A}_k$ it is enough to diagonalise $\rho_1$ within each of its blocks. These blocks concern only such $i,j$-entries of $\rho_1$ that $\kappa_k^{(i)}=\kappa_k^{(j)}$. This means that any $d\times d$ block-diagonal unitary, $U$, which transforms the one-particle orthonormal basis to a basis of NOs while preserving the block-diagonal form of $\rho_1$ is of the form
\begin{equation}
u=\sum_{i,j:\ \kappa_k^{(i)}=\kappa_k^{(j)}}U_{ij}\ket{i}\!\bra{j}.
\end{equation}
Hence, by the preceding discussion, $u^{\otimes N}$ also preserves space $\mathcal{A}_k$, which implies that $\mathrm{Supp}_{\mathcal{B}_1}(\kpsi)\subset\mathcal{A}_k$. Moreover, using such block-diagonal unitaries, one can permute NOs so that the obtained NONs of $\kpsi$ are ordered as in the statement of the theorem.
\end{proof}

Let us next move to the general selection rule, in the case where the NONs can be degenerate. We start with a generalisation of formula (\ref{nondegen-stabiliser}).
\begin{lemma}\label{stabiliser-selection}
Let $\mathcal{B}_1$ be an NO-basis for $\kpsi$. Moreover, denote by $\mk{d}$ the set of one-particle hermitian operators that are diagonal in the basis $\mathcal{B}_1$. Then,
\begin{equation}\label{eq:stab-sel1}
\fl{\mk s}_{\ket{\Psi}}\cap i\mk{d}=\left\{i\sum_{j=1}^d l_i\ket{j}\!\bra{j}:\ \bd{l}\cdotp(\bd{n}_{\bd{i}}-\bd{n})=0{\rm\ for\ all\ }{\bd{i}}\in\mathrm{Supp}_{\mathcal{B}_1}(\kpsi)\right\}.
\end{equation}
Conversely,
\begin{equation}\label{eq:stab-sel2}
\fl\mathrm{Supp}_{\mathcal{B}_1}(\ket{\Psi})\subset\left\{\ket{\bd{i}}\in\mathcal{B}_N:\  \bd{l}\cdotp(\bd{n}_{\bd i}-\bd{n})=0{\rm\ for\ all}\ \hat l_{\mathcal{B}_1}\in\mk{s}_{\ket{\Psi}}\cap\mk{d}\right\}.
\end{equation}
\end{lemma}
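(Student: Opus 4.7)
The plan is to adapt the argument leading to (\ref{nondegen-stabiliser}) — originally carried out under the non-degeneracy assumption that forced $\mk{s}_{\kpsi}\subset i\mk{d}$ — to the degenerate case by restricting from the start to the diagonal subalgebra $i\mk{d}$. The key ingredient is the eigenvalue formula (\ref{eq:diag-generator}): any generator $\hat l_{\mathcal{B}_1}=i\sum_j l_j\ket{j}\!\bra{j}$ of diagonal form acts on the configuration state $\ket{\bd i}$ with eigenvalue $\bd{l}\cdotp\bd{n}_{\bd i}$, so the induced second-quantized operator $\hat L_{\mathcal{B}_1}=\sum_j l_j f_j^\dagger f_j$ satisfies $\hat L_{\mathcal{B}_1}\kpsi=\sum_{\bd i}c_{\bd i}(\bd{l}\cdotp\bd{n}_{\bd i})\ket{\bd i}$ in the NO-expansion (\ref{PsiNO}).

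For the forward inclusion in (\ref{eq:stab-sel1}), I would assume $\hat l_{\mathcal{B}_1}\in\mk{s}_{\kpsi}\cap i\mk{d}$. By definition of local symmetry, $\hat L_{\mathcal{B}_1}\kpsi=\lambda\kpsi$ for some $\lambda\in\RR$, and matching coefficients in the orthonormal configuration expansion yields $\bd{l}\cdotp\bd{n}_{\bd i}=\lambda$ for every $\bd i\in\mathrm{Supp}_{\mathcal{B}_1}(\kpsi)$. To identify $\lambda$, I would multiply this equation by $|c_{\bd i}|^2$ and sum over $\bd i$; using $\sum_{\bd i}|c_{\bd i}|^2=1$ together with the standard identity $n_j=\sum_{\bd i}|c_{\bd i}|^2(\bd{n}_{\bd i})_j$ (encoding that $\rho_1$ is diagonal in $\mathcal{B}_1$ with eigenvalues $n_j$, i.e.\ eq.~(12) of Part I), one obtains $\lambda=\bd{l}\cdotp\bd{n}$. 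The reverse inclusion is immediate from the displayed action: if $\bd{l}\cdotp(\bd{n}_{\bd i}-\bd{n})=0$ on the support, then $\hat L_{\mathcal{B}_1}\kpsi=(\bd{l}\cdotp\bd{n})\kpsi$, so $\hat l_{\mathcal{B}_1}\in\mk{s}_{\kpsi}$.

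The second statement (\ref{eq:stab-sel2}) is essentially a restatement of the characterization just established: if $\bd i\in\mathrm{Supp}_{\mathcal{B}_1}(\kpsi)$, then by the forward direction of (\ref{eq:stab-sel1}) we have $\bd{l}\cdotp(\bd{n}_{\bd i}-\bd{n})=0$ for every $\hat l_{\mathcal{B}_1}\in\mk{s}_{\kpsi}\cap i\mk{d}$, which is precisely the membership condition for the set on the right-hand side. There is no substantial technical obstacle here; the only conceptual subtlety relative to the non-degenerate case is that off-diagonal generators of $\mk{s}_{\kpsi}$ — which can indeed appear once the NONs are degenerate — are intentionally discarded by intersecting with $i\mk{d}$, and the selection rule derived in this way is correspondingly weaker. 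Sharpening the selection rule by exploiting the full (including off-diagonal) symmetry algebra is what will require the additional work done in the subsequent subsections.
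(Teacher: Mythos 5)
Your proposal is correct and follows essentially the same route as the paper: the paper's proof likewise repeats the reasoning of eqs.~(\ref{eq:diag-generator})--(\ref{nondegen-stabiliser}), identifies $\lambda=\bd{l}\cdotp\bd{n}$ from the diagonality of $\rho_1$ in $\mathcal{B}_1$, and obtains (\ref{eq:stab-sel2}) as an immediate consequence of the characterization (\ref{eq:stab-sel1}) (phrased there via the orthogonal-complement relation ${\mk s}_{\ket{\Psi}}\cap i\mk{d}=(i\mk{d}_{\kpsi})^\perp$). Your remark that the off-diagonal generators are deliberately discarded by intersecting with $i\mk{d}$, which is why the resulting rule is weaker and requires the later stratification machinery, matches the paper's intent.
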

\begin{proof}
The proof, in essence, relies on repeating the reasoning from equations (\ref{eq:diag-generator})-(\ref{nondegen-stabiliser}). In particular, for any $\hat L_{\mathcal{B}_1}$ we have
\begin{equation*}
\hat L_{\mathcal{B}_1}\kpsi=\sum_{\bd{i}}c_{\bd{i}}(\bd{l}\cdotp\bd{n}_{\bd{i}})\ket{\bd{i}}.
\end{equation*}
This means that $\hat L_{\mathcal{B}_1}=\sum_{i}l_i\hat n_i$ generates a symmetry of $\kpsi$ if and only if for all $\bd{i}\in \mathrm{Supp}_{\mathcal{B}_1}(\ket{\Psi})$ we have $\bd{l}\cdotp\bd{n}_{\bd{i}}=\lambda$ for some $\lambda\in\RR$. Furthermore, because $\mathcal{B}_1$ is an NO-basis for $\kpsi$, we have $\lambda=\bd{l}\cdotp \bd{n}$ with $\bd n$ being the NON-vector of $\kpsi$. Hence, we have that
\begin{equation}\label{eq:stab-perp}
{\mk s}_{\ket{\Psi}}\cap i\mk{d}=\left(i\mk{d}_{\kpsi}\right)^\perp,
\end{equation}
where
\begin{equation}
\mk{d}_{\kpsi}={\rm Span}\left\{\sum_{j=1}^d \left((\bd{n}_{\bd{i}})_j-n_j\right)\ket{j}\!\bra{j}:\ {\bd{i}}\in\mathrm{Supp}_{\mathcal{B}_1}(\kpsi)\right\}.
\end{equation}
This yields assertion (\ref{eq:stab-sel1}). Conversely, if ${\bd{i}}\in\mathrm{Supp}_{\mathcal{B}_1}(\kpsi)$, then $\sum_{j=1}^d \left((\bd{n}_{\bd{i}})_j-n_j\right)\ket{j}\!\bra{j}\in \mk{d}_{\kpsi}$. By relation (\ref{eq:stab-perp}), we have that $\mk{d}_{\kpsi}=({\mk s}_{\ket{\Psi}}\cap i\mk{d})^\perp$ which yields (\ref{eq:stab-sel2}).
\end{proof}

 Recall that states with degenerate NONs have many NOs. In the remaining part of this section we will make the choice of NOs less ambiguous by requiring them to have a certain additional property. To explain this property, let us first take a closer look at the structure of the local symmetries of a given $\kpsi\in\HN$. Among all generators of local symmetries of $\kpsi$, one can choose a maximal set of operators that commute with each other. Such a set will be denoted by $\mk{s}_\kpsi^{(c)}$. Because all operators from $\mk{s}_\kpsi^{(c)}$ commute with each other, one can find a basis of NOs in which all of them are diagonal. Such NOs will be called {\it adapted}.
\begin{definition}[Adapted NOs]\label{def:adapted}
  Let $\mk{s}_\kpsi^{(c)}$ be a maximal subset of the set of generators of local symmetries of $\kpsi$ that has the property that all operators from $\mk{s}_\kpsi^{(c)}$ commute with each other (in group theory this set is called the Lie algebra of a maximal torus of $S_\kpsi$). NOs $\mathcal{B}_1=\{\ket{i}\}_{i=1}^d$ are called \emph{adapted} if all operators from $\mk{s}_\kpsi^{(c)}$ are diagonal in $\mathcal{B}_1$. In other words, if $ih=i\sum_{i\geq j} H_{ij}\ket{i}\!\bra{j}\in\mk{s}_\kpsi^{(c)}$ then $H_{ij}=0$ whenever $i\neq j$.
\end{definition}

\begin{remark}\label{rem:adapted}
In the adapted basis of NOs we have the orthogonal decomposition
\begin{equation}
{\mk{s}}_{\ket{\Psi}}={\mk{s}}_{\ket{\Psi}}^{(c)}\oplus \left({\mk{s}}_{\ket{\Psi}}^{(c)}\right)^\perp.
\end{equation}
In other words, if $ih\in \mk{s}_\kpsi$, then the diagonal of $ih$ belongs to $\mk{s}_\kpsi$ as well.

To see this, recall that $\mk{s}_\kpsi$ is a compact Lie algebra, hence it decomposes as $\mk{s}_\kpsi=\mk{s}'_\kpsi\oplus \mk{z}_\kpsi$, where $\mk{s}'_\kpsi$ is the semisimple part and $\mk{z}_\kpsi$ is the center. Furthermore, the semisimple part has the following orthogonal decomposition:
\begin{equation}
\mk{s}'_\kpsi=\bigoplus_{\alpha\in R_{+}}\left(\mk{h}_\kpsi^{(\alpha)}\oplus\mk{a}_\kpsi^{(\alpha)}\oplus \mk{b}_\kpsi^{(\alpha)}\right),
\end{equation}
where $R_{+}$ is the set of positive roots of the complexified algebra $\mk{s}'_\kpsi\oplus i\mk{s}'_\kpsi$, $\mk{h}_\kpsi^{(\alpha)}=\RR\left([e_\alpha,e_{-\alpha}]\right)$, $\mk{a}_\kpsi^{(\alpha)}=\RR\left(e_\alpha-e_{-\alpha}\right)$, $\mk{b}_\kpsi^{(\alpha)}=i\RR\left(e_\alpha+e_{-\alpha}\right)$, and $e_\alpha$ is a root operator associated to root $\alpha$. The subalgebra
\begin{equation}
\mk{t}_\kpsi:=\mk{z}_\kpsi\oplus\bigoplus_{\alpha\in R_{+}}\mk{h}_\kpsi^{(\alpha)},
\end{equation}
is a maximal commutative subalgebra of  $\mk{s}_\kpsi$. Denote by $\{\ket{j}\}_{j=1}^d$ a basis that diagonalises the above maximal commutative subalgebra. We will next show that operators from $\mk{a}_\kpsi^{(\alpha)}$ and $\mk{b}_\kpsi^{(\alpha)}$ are purely off-diagonal in this basis. Note first that the commutator of a diagonal matrix with any other matrix necessarily has zero on the diagonal, i.e. $\bra{j}[H,X]\ket{j}=0$ for all $j\in\{1,\dots,d\}$, $H\in \mk{t}_\kpsi$ and $X\in \mk{s}_\kpsi$. Furthermore, for any $\alpha\in R_{+}$ we have $[H,e_\alpha]=\alpha(H) e_\alpha$ and $[H,e_{-\alpha}]=-\alpha(H) e_{-\alpha}$. Hence, for any $X\in \mk{a}_\kpsi^{(\alpha)}$ we find that for all $H\in \mk{t}_\kpsi$
\begin{equation}
\fl 0=\bra{j}[H,X]\ket{j}\propto \bra{j}[H,e_\alpha-e_{-\alpha}]\ket{j}=\alpha(H)\bra{j} e_\alpha+e_{-\alpha}\ket{j}.
\end{equation}
Similarly, for any $X\in \mk{b}_\kpsi^{(\alpha)}$ we find that for all $H\in \mk{t}_\kpsi$
\begin{equation}
\fl 0=\bra{j}[H,X]\ket{j}\propto i\bra{j}[H,e_\alpha+e_{-\alpha}]\ket{j}=i\alpha(H)\bra{j}e_\alpha-e_{-\alpha}\ket{j}.
\end{equation}
It is now straightforward to check that because $\alpha(H)\neq 0$, the above two equations imply that $\bra{j}X\ket{j}=0$ for all $X\in \mk{a}_\kpsi^{(\alpha)}\oplus \mk{b}_\kpsi^{(\alpha)}$.
\end{remark}
\noindent The above notion of adapted NOs allows us to make a connection between the image of $d\mu$ and the structure of a given state $\kpsi$ via lemma \ref{thm:image}. The precise form of this connection is the subject of the following lemma.
\begin{lemma}\label{lemma:image-general}
Let $\mathcal{B}_1=\{\ket{i}\}_{i=1}^d$ be a basis of NOs which is adapted for $\kpsi\in\HN$ and let $\bd{n}$ be the vector of NONs of $\kpsi$. Moreover, denote by $\mk{d}$ the set of one-particle hermitian operators that are diagonal in basis $\mathcal{B}_1$. Then,
\begin{equation}\label{eq:image-general}
\fl\im\ d\mu_{\ket{\Psi}}\cap \mk{d}={\rm Span}\left\{\sum_{j=1}^d \left((\bd{n}_{\bd{i}})_j-n_j\right)\ket{j}\!\bra{j}:\ \bd{i}\in{\rm Supp}_{\mathcal{B}_1}(\kpsi)\right\}.
\end{equation}
\end{lemma}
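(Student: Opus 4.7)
The plan is to combine Lemma~\ref{thm:image}, Remark~\ref{rem:adapted}, and Lemma~\ref{stabiliser-selection}. By Lemma~\ref{thm:image}, $\im\,d\mu_{\kpsi}$ coincides (after the identification $\mk{u}(d)\leftrightarrow i\mk{u}(d)$ via multiplication by $i$) with the orthogonal complement, in the real inner-product space of hermitian operators on $\Ho$ equipped with the trace pairing, of the set of hermitian $h$ such that $ih\in \mk{s}_{\kpsi}$. First, I would therefore rewrite $\im\,d\mu_{\kpsi}\cap\mk{d}$ as the collection of diagonal hermitian $X$ that are trace-orthogonal to every such $h$.

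The key step is to use the adapted-basis property via Remark~\ref{rem:adapted}. Any $ih\in \mk{s}_{\kpsi}$ splits uniquely in $\mathcal{B}_1$ as $ih = ih_{\mk{d}} + ih_{\mk{d}^\perp}$, with both summands again lying in $\mk{s}_{\kpsi}$. Because $X\in\mk{d}$ is diagonal, $\tr(X h_{\mk{d}^\perp}) = 0$ holds automatically, so the orthogonality condition collapses to $\tr(X h_{\mk{d}}) = 0$ for every $ih_{\mk{d}}\in \mk{s}_{\kpsi}\cap i\mk{d}$. Thus $\im\,d\mu_{\kpsi}\cap\mk{d}$ is exactly the orthogonal complement \emph{within} $\mk{d}$ of the diagonal slice $\mk{s}_{\kpsi}\cap i\mk{d}$.

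To conclude, I would apply Lemma~\ref{stabiliser-selection}, which identifies $\mk{s}_{\kpsi}\cap i\mk{d}$ as the set of $i\sum_j l_j\ket{j}\!\bra{j}$ with $\bd{l}\cdotp(\bd{n}_{\bd{i}}-\bd{n})=0$ for all $\bd{i}\in\mathrm{Supp}_{\mathcal{B}_1}(\kpsi)$. Using the natural identification $\mk{d}\cong\RR^d$ by diagonal entries, this slice is the orthogonal complement in $\RR^d$ of $V:=\mathrm{Span}\{\bd{n}_{\bd{i}}-\bd{n}:\bd{i}\in\mathrm{Supp}_{\mathcal{B}_1}(\kpsi)\}$, so taking a second orthogonal complement recovers $V$ itself. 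Re-embedding $V$ into $\mk{d}$ delivers precisely the right-hand side of~(\ref{eq:image-general}).

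The only non-routine step is the splitting in the second paragraph. Without the adapted-basis assumption, generators in $\mk{s}_{\kpsi}$ can have a diagonal part that is \emph{not} itself a symmetry, so the trace pairing between a diagonal $X$ and a generic $h$ with $ih\in\mk{s}_{\kpsi}$ would mix diagonal and off-diagonal contributions and the intersection $\im\,d\mu_{\kpsi}\cap\mk{d}$ would not factor through $\mk{s}_{\kpsi}\cap i\mk{d}$ alone. Remark~\ref{rem:adapted} is exactly what decouples these components and reduces the computation to an orthogonal complement in~$\RR^d$.
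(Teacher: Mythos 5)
Your proposal is correct and follows essentially the same route as the paper: Lemma~\ref{thm:image} to express the image as an orthogonal complement of $\mk{s}_{\kpsi}$, the adapted-basis decomposition of Remark~\ref{rem:adapted} to reduce trace-orthogonality within $\mk{d}$ to orthogonality against the diagonal slice $\mk{s}_{\kpsi}\cap i\mk{d}$, and Lemma~\ref{stabiliser-selection} with a double orthocomplement to recover the span of the shifted occupation vectors. Your closing remark correctly pinpoints why the adapted-basis hypothesis is essential, which is exactly the role the paper assigns to equation~(\ref{eq:projection-intersection}).
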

\begin{proof}
By lemma \ref{thm:image}, for any state $\kpsi\in\HN$, we have
\begin{equation}\label{eq:proof-imagegen}
\im\ d\mu_{\ket{\Psi}}\cap\iota\mk{d}=i\left({\mk s}_{\ket{\Psi}}\right)^\perp\cap\mk{d}=i\left(pr_{\mk{d}}{\mk{s}}_{\ket{\Psi}}\right)^\perp\cap\mk{d},
\end{equation}
where by $pr_{\iota\mk{d}}$ we denote the orthogonal projection on the space of anti-hermitian diagonal matrices. In other words, for any anti-hermitian $A$, $pr_{\iota\mk{d}}(A)$ is the diagonal matrix whose non-zero entries are identical to those of $A$. In the second step of equation (\ref{eq:proof-imagegen}) we have used the fact that the scalar product $\tr(AB)$ depends only on the diagonal part of $A$ if $B$ is diagonal. Indeed, any matrix $A\in {\mk s}_{\ket{\Psi}}$ can be uniquely decomposed as a sum $A=A_{\mk d}+A_{{\mk d}^\perp}$, where $A_{\mk d}:=pr_{\iota\mk{d}}(A)$. The matrix $B\in{\mk d}$ is orthogonal to $A$ iff $\tr\left((A_{\mk d}+A_{{\mk d}^\perp})B\right)=0$. This happens iff $\tr(A_{\mk d}B)=0$. In other words, $B\in \left({\mk s}_{\ket{\Psi}}\right)^\perp\cap{\mk d}$ iff $B\in \left(pr_{\mk{d}}{\mk{s}}_{\ket{\Psi}}\right)^\perp\cap\mk{d}$. Next, we use the fact that the NO-basis in which the above operators are diagonal, is adapted for $\kpsi$. This implies that
\begin{equation}\label{eq:projection-intersection}
pr_{\mk{d}}{\mk{s}}_{\ket{\Psi}}={\mk{s}}_{\ket{\Psi}}\cap\mk{d},
\end{equation}
i.e.\ in an adapted basis on NOs we have that if $A\in {\mk{s}}_{\ket{\Psi}}$, then automatically $A_{\mk d}\in {\mk{s}}_{\ket{\Psi}}$. To see this, recall that ${\mk{s}}_{\ket{\Psi}}\cap\mk{d}$ is precisely ${\mk{s}}_{\ket{\Psi}}^{(c)}$ from definition \ref{def:adapted} written in a basis of adapted NOs. As we explained in remark \ref{rem:adapted}, in an adapted basis we have the orthogonal decomposition
\begin{equation}
{\mk{s}}_{\ket{\Psi}}={\mk{s}}_{\ket{\Psi}}^{(c)}\oplus \left({\mk{s}}_{\ket{\Psi}}^{(c)}\right)^\perp.
\end{equation}
The space $\left({\mk{s}}_{\ket{\Psi}}^{(c)}\right)^\perp$ is the space of matrices with zeros on their diagonals. Hence, taking the diagonal of the matrix $A\in{\mk{s}}_{\ket{\Psi}}$ is the same as projecting $A$ to ${\mk{s}}_{\ket{\Psi}}^{(c)}$.

Equation (\ref{eq:projection-intersection}) applied to (\ref{eq:proof-imagegen}) means that $\im\ d\mu_{\ket{\Psi}}\cap\mk{d}$ is determined by the diagonal matrices that generate local symmetries of $\kpsi$. Finally, by lemma \ref{stabiliser-selection} we have that
the diagonal component of the orthogonal complement of ${\mk s}_{\ket{\Psi}}\cap\mk{d}$ is precisely the right hand side of formula (\ref{eq:image-general}).
\end{proof}

In order to deduce the support of $\kpsi$ from lemma \ref{lemma:image-general} and the knowledge of its NONs, we have to take a closer look at the subtle structure of local symmetries of states with fixed NONs. In the remaining part of this subsection, for simplicity we fix a vector of ordered occupation numbers $\bd{n}$ and a one-particle orthonormal base $\mathcal{B}_1=\{\ket{i}\}_{i=1}^d$.
 Let us distinguish the set of quantum states for which $\mathcal{B}_1$ is a NO-basis and $\bd{n}$ is the vector of NONs.
\begin{equation}\label{def:Mn}
\mathcal{M}_{\bd{n}}:=\left\{\kpsi\in\HN:\ \bra{\Psi}f_i^\dagger f_j\ket{\Psi}=n_i\delta_{ij}\right\}.
\end{equation}
In other words, all states from $\mathcal{M}_{\bd{n}}$ have their one-particle reduced density operators equal to
\begin{equation}
\rho_1^{(\bd{n})}=\sum_{j=1}^dn_j\ket{j}\!\bra{j}.
\end{equation}
The stabiliser of $\rho_1^{(\bd{n})}$ will be denoted by $S_{\bd{n}}$. Importantly, by its definition set $\mathcal{M}_{\bd{n}}$ is $S_{\bd{n}}$-invariant, i.e.\ if $\kpsi\in\mathcal{M}_{\bd{n}}$, then for any one-particle $u$ such that $u\rho_1^{(\bd{n})}u^\dagger=\rho_1^{(\bd{n})}$, we have $u^{\otimes N}\kpsi\in\mathcal{M}_{\bd{n}}$. Finally, note that fixing a one-particle basis does not restrict the generality of the results that will follow, as any state whose ordered NONs are equal to $\bd{n}$ can be brought to $\mathcal{M}_{\bd{n}}$ by a change of its one-particle basis.

The structure of symmetries of states from $\mathcal{M}_{\bd{n}}$ that we are about to review will give us a hierarchy of states from $\mathcal{M}_{\bd{n}}$ and will make precise the notion of a generic state. Mathematically, we will explore the stricture of $\mathcal{M}_{\bd{n}}$ as a {\it stratified symplectic space} \cite{Sjamaar}. This in particular means that space $\mathcal{M}_{\bd{n}}$ can be decomposed into disjoint subsets $\{N_\sigma\}_{\sigma\in\Sigma}$ called strata, for which
\begin{equation}\label{eq:stratification}
\mathcal{M}_{\bd{n}}=\bigsqcup_{\sigma\in\Sigma}N_\sigma.
\end{equation}
In the above expressions, the enumeration is in terms of the discrete set $\Sigma$, which is the set of conjugacy classes of local symmetry groups of states from $\mathcal{M}_{\bd{n}}$. In other words, two states $\kpsi$ and $\ket{\Psi'}$ belong to the same stratum if and only if $S_{\kpsi}=u S_{\ket{\Psi'}}u^\dagger$ for some $u\in S_{\bd n}$. Let us next motivate this construction. If two states can be transformed into each other by a matrix $u\in S_{\bd{n}}$, their local symmetry groups are conjugate, i.e.
\begin{equation}
S_{u^{\otimes N}\kpsi}=uS_{\kpsi} u^\dagger.
\end{equation}
In other words, $S_{u^{\otimes N}\kpsi}$ and $S_{\kpsi}$ are in the same conjugacy class $\sigma$. However, the converse is not true -- two states may have local symmetry groups from the same conjugacy class without being unitarily equivalent. Nevertheless, they are arranged to form a single stratum. The stratification (\ref{eq:stratification}) has three important properties.

\begin{theorem}\label{stratification-properties}
Let $\{N_\sigma\}_{\sigma\in\Sigma}$ be the symplectic stratification of $\mathcal{M}_{\bd{n}}$. The following properties hold \cite{Sjamaar}.
\begin{enumerate}
\item If $\overline{N_\sigma}\cap N_{\sigma'}\neq\emptyset$, then $\overline{N_\sigma}\supset N_{\sigma'}$. This means that strata can be partially ordered with respect to the relation of inclusion, i.e.\ we say that $N_\sigma$ is bigger than $N_{\sigma'}$ if and only if $\overline{N_\sigma}\supset N_{\sigma'}$.
\item If $\kpsi\in N_\sigma$ and $\ket{\Psi'}\in N_{\sigma'}$ with $\overline{N_\sigma}\supset N_{\sigma'}$, their stabilisers are related by
\begin{equation}uS_{\kpsi}u^\dagger\subset S_{\ket{\Psi'}}\end{equation}
for some $u\in S_{\bd n}$.
\item There exists a unique maximal stratum, $N_{max}$, which is of maximal dimension and is open and dense in $\mathcal{M}_{\bd{n}}$.
\end{enumerate}
\end{theorem}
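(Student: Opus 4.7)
The plan is to recognise the decomposition $\{N_\sigma\}_{\sigma\in\Sigma}$ as the orbit-type stratification associated to the smooth action of the compact Lie group $S_{\bd n}$ on $\mathcal{M}_{\bd n}$ via $\kpsi\mapsto u^{\otimes N}\kpsi$, and then to deduce all three properties from the classical machinery of compact group actions, adapted to the stratified-symplectic setting of Sjamaar. As a preliminary I would verify that $\mathcal{M}_{\bd n}$ is $S_{\bd n}$-invariant, which follows because $u\in S_{\bd n}$ conjugates $\rho_1^{(\bd n)}$ to itself, so $u^{\otimes N}\kpsi$ has the same one-particle density operator as $\kpsi$; consequently the $S_{\bd n}$-orbits foliate $\mathcal{M}_{\bd n}$ and two states lie in the same $N_\sigma$ exactly when their stabilisers in $S_{\bd n}$ are $S_{\bd n}$-conjugate.

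For (i) and (ii) I would invoke the differentiable slice theorem: around any $\ket{\Psi'}\in N_{\sigma'}$ with stabiliser $H=S_{\ket{\Psi'}}$ there exists an $S_{\bd n}$-equivariant tubular neighbourhood of the orbit $S_{\bd n}\cdot\ket{\Psi'}$ modelled on the twisted product $S_{\bd n}\times_H V$ for an $H$-representation $V$. In this local model the stabiliser of a point with slice coordinate $v\in V$ is the $H$-stabiliser of $v$, which is automatically contained in $H$; conjugating back by the appropriate $u\in S_{\bd n}$ yields statement (ii). To obtain (i) I would argue that if $\kpsi\in N_\sigma$ accumulates on some point of $N_{\sigma'}$, then inside one such tube the stratum $N_\sigma$ appears as the preimage of a fixed $H$-orbit type in $V$, whose closure contains the origin (i.e.\ all of $N_{\sigma'}\cap \mathrm{tube}$); translating this by $S_{\bd n}$ shows $\overline{N_\sigma}\supset N_{\sigma'}$ globally.

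For (iii) I would invoke the principal orbit type theorem: among the finitely many conjugacy classes of stabilisers that actually occur on $\mathcal{M}_{\bd n}$ there is a unique minimal one, whose stratum $N_{\max}$ is open and whose complement, being the union of the strictly lower-dimensional strata described by slices $S_{\bd n}\times_H V$ with larger $H$, has strictly smaller dimension and hence empty interior; density then follows. Uniqueness is immediate, because two open-dense strata must meet, forcing their stabiliser classes to coincide. The main obstacle is that $\mathcal{M}_{\bd n}$, defined by the quadratic constraints $\bra\Psi f_i^\dagger f_j\ket\Psi=n_i\delta_{ij}$, is in general not a smooth manifold, so the classical slice theorem does not apply off the shelf; this is precisely what Sjamaar's extension to stratified symplectic spaces remedies, and verifying that $\mathcal{M}_{\bd n}$ fits that framework is the single delicate point, after which properties (i)--(iii) are literal consequences.
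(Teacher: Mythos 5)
The paper offers no proof of this theorem: it is imported wholesale from the theory of stratified symplectic spaces \cite{Sjamaar}, so your sketch is doing more work than the text itself. Your route --- identifying $\{N_\sigma\}$ as the orbit-type decomposition of the $S_{\bd n}$-action on the momentum-map fibre $\mathcal{M}_{\bd n}=\mu^{-1}(\rho_1^{(\bd n)})$ and deducing (i)--(iii) from the slice theorem and the principal orbit type theorem --- is essentially the standard argument that underlies the cited reference, and the logic for (i) and (ii) (stabilisers of points in a tube $S_{\bd n}\times_H V$ are subconjugate to $H$, and the frontier condition read off from the local model) is sound. Two points deserve sharpening. First, as you yourself flag, $\mathcal{M}_{\bd n}$ is generically not a manifold, so the differentiable slice theorem cannot be invoked \emph{on} $\mathcal{M}_{\bd n}$; the standard repair is to apply the slice theorem (or the Marle--Guillemin--Sternberg local normal form for the momentum map) on the ambient smooth space $\HN$ (or its projectivisation) and then intersect the resulting equivariant tube with the level set --- this is precisely the Sjamaar--Lerman construction, and without saying this your tube $S_{\bd n}\times_H V$ is not yet known to exist inside $\mathcal{M}_{\bd n}$. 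Second, in (iii) your uniqueness argument (``two open dense strata must meet'') shows at most that there is one stratum that is open \emph{and dense}, but existence of a dense open stratum, rather than several open strata each dense only in a connected component, requires connectedness of $\mathcal{M}_{\bd n}$ (or of $\mathcal{M}_{\bd n}/S_{\bd n}$); for fibres of momentum maps of compact group actions on compact connected K\"ahler manifolds this is supplied by the Atiyah/Guillemin--Sternberg connectedness theorems, and that input should be made explicit. With those two insertions your outline matches the proof the paper delegates to the literature.
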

By calling a state \emph{generic} we mean that it belongs to stratum $N_{max}$. As we will show next, the uniqueness of the stratum $N_{max}$ together with lemma \ref{lemma:image-general} implies the existence of a selection rule which is universal for all states from $\mathcal{M}_{\bd{n}}$.

\begin{definition}[$\sigma$-ansatz]\label{def:sigma-ansatz}
  Let $\kpsi\in N_{\sigma}\subset \mathcal{M}_{\bd{n}}$ be such that $\mathcal{B}_1$ is an adapted basis of NOs. A $\sigma$-ansatz space for $\mathcal{M}_{\bd{n}}$, $\mathcal{A}_{\bd n}^{(\sigma)}$ is defined as the set of all configurations whose (shifted) NON-vectors are perpendicular to all diagonal operators from $\mk{s}_{\kpsi}$.
\begin{equation}
\mathcal{A}_{\bd n}^{(\sigma)}:=\left\{\ket{\bd{i}}\in\mathcal{B}_N:\  \bd{l}\cdotp(\bd{n}_{\bd i}-\bd{n})=0{\rm\ for\ all}\ \hat l_{\mathcal{B}_1}\in\mk{s}_{\kpsi}\cap\iota\mk{d}\right\}.
\end{equation}
\end{definition}
Note that for different $\sigma$'s the $\sigma$-anzatz $\mathcal{A}_{\bd n}^{(\sigma)}$ can change as $\mk{s}_{\kpsi}\cap\iota\mk{d}$ is determined by $\sigma$.
\begin{remark}\label{obs:ansatz}
  Importantly, as the local symmetry groups of all states from $N_{\sigma}$ are conjugate to each other, $\sigma$-ansatzes for different $\kpsi\in N_{\sigma}\subset \mathcal{M}_{\bd{n}}$ are the same up to permutation of elements of the chosen adapted basis of NOs.
\end{remark}
\begin{remark}\label{rem:sigma-ansatz}
Lemma \ref{stabiliser-selection} applied to the case of an adapted basis of NOs implies that if $\kpsi\in N_{\sigma}\subset \mathcal{M}_{\bd{n}}$ and $\mathcal{B}_1$ is an adapted basis of NOs for $\kpsi$, then
\begin{equation}
\{\ket{{\bd i}}:\ {\bd i}\in\mathrm{Supp}_{\mathcal{B}_1}(\kpsi)\}\subset \mathcal{A}_{\bd n}^{(\sigma)}.
\end{equation}
\end{remark}

\begin{theorem}[Maximal ansatz is universal]\label{thm:maximal-ansatz}
For any $\kpsi\in \mathcal{M}_{\bd{n}}$ there exists an operator $u\in S_{\bd n}$ such that
\begin{equation}\label{eq:max-ansatz}
\{\ket{{\bd i}}:\ {\bd i}\in\mathrm{Supp}_{\mathcal{B}_1}(u^{\otimes N}\kpsi)\}\subset\mathcal{A}_{\bd n}^{(max)},
\end{equation}
where $\mathcal{A}_{\bd n}^{(max)}$ is any $\sigma$-ansatz corresponding to the maximal stratum $N_{max}\subset\mathcal{M}_{\bd{n}}$.
\end{theorem}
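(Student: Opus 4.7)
\emph{Plan.} The theorem reduces to an algebraic inclusion of local-symmetry Lie algebras. I will construct $u\in S_{\bd n}$ such that $\mathcal{B}_1$ is an adapted NO basis for \emph{both} $u^{\otimes N}\ket{\Psi}$ and for a specific representative $\ket{\Phi}\in N_{max}$ that defines $\mathcal{A}_{\bd n}^{(max)}$, with the extra inclusion $\mk{s}_{\ket{\Phi}}\cap i\mk{d}\subset \mk{s}_{u^{\otimes N}\ket{\Psi}}\cap i\mk{d}$. Lemma \ref{stabiliser-selection} applied to $u^{\otimes N}\ket{\Psi}$ then forces its support to consist of configurations $\bd i$ for which $\bd l\cdot(\bd{n}_{\bd i}-\bd n)=0$ holds for every $\hat l_{\mathcal{B}_1}\in \mk{s}_{u^{\otimes N}\ket{\Psi}}\cap i\mk{d}$, and in particular for every element of the smaller subalgebra $\mk{s}_{\ket{\Phi}}\cap i\mk{d}$ that cuts out $\mathcal{A}_{\bd n}^{(max)}$ via Definition \ref{def:sigma-ansatz}.

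\emph{Step 1 (symmetry inclusion).} Let $\ket{\Psi}\in N_\sigma$. By Theorem \ref{stratification-properties}(iii), $N_{max}$ is open and dense, so $N_\sigma\subset \overline{N_{max}}$; part (ii) then supplies $\ket{\Phi_0}\in N_{max}$ and $w\in S_{\bd n}$ with $wS_{\ket{\Phi_0}}w^\dagger\subset S_{\ket{\Psi}}$, i.e., $\mk{s}_{\ket{\Phi_1}}\subset \mk{s}_{\ket{\Psi}}$ where $\ket{\Phi_1}:=w^{\otimes N}\ket{\Phi_0}\in N_{max}$. Pick a maximal commutative subalgebra $\mk{c}_1$ of $\mk{s}_{\ket{\Phi_1}}$ and extend it to a maximal commutative subalgebra $\mk{c}_2$ of $\mk{s}_{\ket{\Psi}}$ containing $\mk{c}_1$ (take a maximal commutative subalgebra of the centraliser of $\mk{c}_1$ inside $\mk{s}_{\ket{\Psi}}$; this is precisely where the Cartan-extension argument enters). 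Since $\mk{c}_2\subset \mk{s}_{\ket{\Psi}}\subset \mk{s}_{\rho_1^{(\bd n)}}$, every element of $\mk{c}_2$ commutes with $\rho_1^{(\bd n)}$, so there exists an orthonormal basis $\mathcal{B}^\star$ simultaneously diagonalising $\mk{c}_2$ and $\rho_1^{(\bd n)}$, with the $\rho_1^{(\bd n)}$-eigenvalues arranged in the order $\bd n$.

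\emph{Step 2 (the rotation $u$ and the adapted bases).} Let $u$ be the unitary sending $\mathcal{B}^\star$ to $\mathcal{B}_1$ in the eigenvalue-preserving way; then $u\rho_1^{(\bd n)}u^\dagger=\rho_1^{(\bd n)}$, hence $u\in S_{\bd n}$. The conjugate $u\mk{c}_2 u^\dagger$ is diagonal in $\mathcal{B}_1$ and remains maximal commutative in $u\mk{s}_{\ket{\Psi}}u^\dagger=\mk{s}_{u^{\otimes N}\ket{\Psi}}$, so $u\mk{c}_2 u^\dagger = \mk{s}_{u^{\otimes N}\ket{\Psi}}\cap i\mk{d}$ by maximality of this diagonal torus inside the diagonal (commutative) set. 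This makes $\mathcal{B}_1$ an adapted NO basis for $u^{\otimes N}\ket{\Psi}$. The same argument applied with $\mk{c}_1\subset \mk{c}_2$ in place of $\mk{c}_2$ shows $\mathcal{B}_1$ is adapted for $u^{\otimes N}\ket{\Phi_1}\in N_{max}$ as well, and conjugation preserves the Step-1 inclusion, yielding $\mk{s}_{u^{\otimes N}\ket{\Phi_1}}\cap i\mk{d}\subset \mk{s}_{u^{\otimes N}\ket{\Psi}}\cap i\mk{d}$.

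\emph{Step 3 (conclusion and main obstacle).} Applying Lemma \ref{stabiliser-selection} to $u^{\otimes N}\ket{\Psi}$ in its adapted basis $\mathcal{B}_1$ gives
\[
\mathrm{Supp}_{\mathcal{B}_1}\!\left(u^{\otimes N}\ket{\Psi}\right)\subset \{\ket{\bd i}\in\mathcal{B}_N:\ \bd l\cdot(\bd{n}_{\bd i}-\bd n)=0\ \text{for all}\ \hat l_{\mathcal{B}_1}\in \mk{s}_{u^{\otimes N}\ket{\Psi}}\cap i\mk{d}\},
\]
and this set is contained in $\mathcal{A}_{\bd n}^{(max)}$ via Definition \ref{def:sigma-ansatz} with representative $u^{\otimes N}\ket{\Phi_1}\in N_{max}$, because its defining algebra $\mk{s}_{u^{\otimes N}\ket{\Phi_1}}\cap i\mk{d}$ is smaller and therefore imposes strictly fewer linear constraints on $\bd{n}_{\bd i}-\bd n$. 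The main technical obstacle is Step 2: Step 1 only produces an abstract inclusion of Lie algebras, while $\mathcal{A}_{\bd n}^{(\sigma)}$ and $\mathcal{A}_{\bd n}^{(max)}$ are defined only after picking an adapted basis for each stratum, which a priori are unrelated. The resolution combines the Cartan-type extension of maximal commutative subalgebras from $\mk{s}_{\ket{\Phi_1}}$ to $\mk{s}_{\ket{\Psi}}$ with the freedom inside $S_{\bd n}$ (block-diagonal rotations in the degenerate eigenspaces of $\rho_1^{(\bd n)}$) to simultaneously pull this joint maximal torus onto the diagonal part of $\mathcal{B}_1$.
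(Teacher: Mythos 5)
Your proof is correct and follows the same overall strategy as the paper's: invoke Theorem \ref{stratification-properties}(ii) to obtain a conjugated inclusion of local symmetry groups between a generic representative of $N_{max}$ and the given state, pass to Lie algebras, intersect with the diagonal subalgebra $i\mk{d}$, and convert the resulting inclusion into a support inclusion via Lemma \ref{stabiliser-selection}. The one place where you genuinely add something is the adapted-basis compatibility: the paper simply asserts that the inclusion $\mk{s}_{u^{\otimes N}\kpsi}\subset\mk{s}_{\ket{\Psi'}}$ forces $\mathcal{B}_1$ (adapted for $\ket{\Psi'}$) to be adapted for $u^{\otimes N}\kpsi$ as well, which is not automatic --- a maximal torus of the smaller algebra need not sit inside the diagonal torus of the larger one. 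Your Step 1--2 construction (choose a maximal commutative subalgebra $\mk{c}_1$ of the \emph{smaller} algebra $\mk{s}_{\ket{\Phi_1}}$, extend it through its centraliser to a maximal commutative subalgebra $\mk{c}_2$ of $\mk{s}_{\ket{\Psi}}$, and only then pick the basis diagonalising $\mk{c}_2$ jointly with $\rho_1^{(\bd n)}$) fills this gap cleanly and yields simultaneous adaptedness by construction. Two minor remarks: (a) the statement allows $\mathcal{A}_{\bd n}^{(max)}$ to be \emph{any} $\sigma$-ansatz of the maximal stratum, whereas your argument produces one particular realisation; you should close this with Remark \ref{obs:ansatz}, absorbing the relating permutation $v\in S_{\bd n}$ into $u$, exactly as the paper does. (b) In Step 3 the phrase ``strictly fewer linear constraints'' should read ``no more constraints'' --- the inclusion of algebras need not be strict --- but this does not affect the containment you conclude.
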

\begin{proof}
Let $\kpsi\in N_{max}$ and $\ket{\Psi'}\in N_\sigma$ for some stratum $\sigma$. Moreover, assume that $\mathcal{B}_1$ is an adapted basis of NOs for $\ket{\Psi'}$ (this can be always satisfied by transforming $\ket{\Psi'}\to w\ket{\Psi'}$ for proper $w\in S_{\bd n}$). By point (ii) of theorem \ref{stratification-properties}, we have that there exists $u\in S_{\bd{n}}$ for which
\begin{equation}\label{eq:strat-def}
uS_{\kpsi}u^\dagger\subset S_{\ket{\Psi'}}.
\end{equation}
Equation (\ref{eq:strat-def}) on the level of generators reads as
\begin{equation}
{\mk s}_{u^{\otimes N}\kpsi}\subset {\mk s}_{\ket{\Psi'}}.
\end{equation}
In particular, we have that ${\mk s}_{u^{\otimes N}\kpsi}\cap\iota\mk{d}\subset {\mk s}_{\ket{\Psi'}}\cap\iota\mk{d}$. Hence, by lemma \ref{stabiliser-selection}
\begin{eqnarray*}
\fl\left\{\ket{\bd{i}}\in\mathcal{B}_N:\  \bd{l}\cdotp(\bd{n}_{\bd i}-\bd{n})=0{\rm\ for\ all}\ \hat l_{\mathcal{B}_1}\in\mk{s}_{u^{\otimes N}\kpsi}\cap\iota\mk{d}\right\} \supset \\ \supset \left\{\ket{\bd{i}}\in\mathcal{B}_N:\  \bd{l}\cdotp(\bd{n}_{\bd i}-\bd{n})=0{\rm\ for\ all}\ \hat l_{\mathcal{B}_1}\in\mk{s}_{\ket{\Psi'}}\cap\iota\mk{d}\right\}.
\end{eqnarray*}
Note that relation (\ref{eq:strat-def}) implies that $\mathcal{B}_1$ must also be an adapted basis of NOs for $u^{\otimes N}\kpsi$, hence by remark \ref{obs:ansatz} for every maximal ansatz $\mathcal{A}_{\bd n}^{(max)}$ there exists a permutation matrix $v\in S_{\bd n}$ such that
\begin{eqnarray*}
\left\{\ket{\bd{i}}\in\mathcal{B}_N:\  \bd{l}\cdotp(\bd{n}_{\bd i}-\bd{n})=0{\rm\ for\ all}\ \hat l_{\mathcal{B}_1}\in\mk{s}_{u^{\otimes N}\kpsi}\cap\iota\mk{d}\right\}=v\mathcal{A}_{\bd n}^{(max)}.
\end{eqnarray*}
Hence,
\begin{eqnarray*}
\left\{\ket{\bd{i}}\in\mathcal{B}_N:\  \bd{l}\cdotp(\bd{n}_{\bd i}-\bd{n})=0{\rm\ for\ all}\ \hat l_{\mathcal{B}_1}\in\mk{s}_{v^{\otimes N}\ket{\Psi'}}\cap\iota\mk{d}\right\}\subset\mathcal{A}_{\bd n}^{(max)}.
\end{eqnarray*}
Finally, assertion (\ref{eq:max-ansatz}) follows from directly lemma \ref{stabiliser-selection} applied to state $v^{\otimes N}\ket{\Psi'}$.
\end{proof}
In the light of theorem \ref{thm:maximal-ansatz}, finding the selection rule in the general case of possibly degenerate occupation numbers boils down to determining the maximal ansatz. As we will next show, this is possible under some additional technical assumptions that concern the distribution of vertices of the Pauli hypercube relatively to the GPCs that are saturated by the given vector $\bd n$.

Let us start with the case when occupation vector $\bd n$ saturates exactly one GPC, $D_k$. Pick a state $\kpsi\in \mathcal{M}_{\bd n}\cap \mathcal{A}_k$ (see definition \ref{def:ansatz}) such that $\mathcal{B}_1$ is an adapted basis of NOs for $\kpsi$. We necessarily also have that $\kpsi$ belongs to some stratum $N_\sigma\subset \mathcal{M}_{\bd n}$. Because $\kpsi\in\mathcal{A}_k$, $\kpsi$ is stabilised by operator
\[\hat K_{\mathcal{B}_1}=\sum_{j=1}^d\kappa_k^{(j)}\hat n_j.\]
Moreover, because $\mathcal{B}_1$ is an adapted basis of NOs for $\kpsi$, by remark \ref{rem:sigma-ansatz} we have
\begin{equation}
\mathrm{Supp}_{\mathcal{B}_1}(\kpsi)\subset \mathcal{A}_{\bd n}^{(\sigma)}\subset \mathcal{A}_k.
\end{equation}
Using arguments analogous to the ones used in the proof of theorem \ref{thm:maximal-ansatz} one can show that any generic state $\ket{\Psi'}\in N_{max}\subset \mathcal{M}_{\bd n}$ can be transformed via $u\in S_{\bd n}$ to a state $\ket{\tilde\Psi}$ for which $\mathcal{B}_1$ is an adapted basis of NOs and $S_{\ket{\tilde\Psi}}\subset S_{\kpsi}$. In particular, because $\mathcal{B}_1$ is an adapted basis of NOs for both $\kpsi$ and $\ket{\tilde\Psi}$, all operators from the maximal commutative subalgebra of $s_{\ket{\tilde\Psi}}$ which is diagonal in $\mathcal{B}_1$ belong to the corresponding maximal commutative subalgebra of $s_{\kpsi}$. Note that by remark \ref{obs:ansatz} we have a freedom of acting on $\ket{\tilde\Psi}$ with any permutation matrix $v\in S_{\bd n}$. Two following mutually exclusive cases are possible.
\begin{enumerate}
\item  There exists a permutation matrix $v\in S_{\bd n}$ such that $\hat k_{\mathcal{B}_1}=\sum_{j=1}^d\kappa_k^{(j)}\ket{j}\!\bra{j}\in vs_{\ket{\tilde\Psi}}v^\dagger$. Then we have that ansatz $\mathcal{A}_k$ leads to the universal selection rule, i.e.\ $\mathcal{A}_{\bd n}^{(max)}\subset \mathcal{A}_k$.
\item For all permutation matrices $v\in S_{\bd n}$ we have $\hat k_{\mathcal{B}_1}\notin vs_{\ket{\tilde\Psi}}v^\dagger$. Then the maximal commutative subalgebra of $s_{\kpsi}$ contains $\hat k_{\mathcal{B}_1}$ and another operator $\hat l_{\mathcal{B}_1}$ which is linearly independent from  $\hat k_{\mathcal{B}_1}$ and is not proportional to $v\hat k_{\mathcal{B}_1}v^\dagger$ for all permutation matrices $v\in S_{\bd n}$. This means that vector $\bd{l}$ may define universal ansatz which is different from $\mathcal{A}_k$, i.e.
\begin{equation}
\mathcal{A}_{\bd n}^{(max)}\subset{\rm Span}\left\{\ket{\bd{i}}\in\mathcal{B}_N:\  \bd{l}\cdotp(\bd{n}_{\bd i}-\bd{n})=0\right\}\neq\mathcal{A}_k.
\end{equation}
\end{enumerate}
The following technical assumption \ref{assumption} allows us to exclude the above case (ii). More specifically,  we want to exclude the possibility of the existence of hyperplanes spanned by vertices of the Pauli hypercube which are different than $D_k$ and all its relevant reflections. This can be achieved by incorporating the following combinatorial procedure. Let us first introduce the group $\Pi_{\bd{n}}$ of permutation $\pi$ which leaves $\bd{n}$ invariant,
\begin{equation}\label{perm}
\Pi_{\bd{n}} \equiv \{\mbox{permutations}\,\pi\,|\,\pi(\bd{n})= \bd{n}\}\,.
\end{equation}
For instance, if $\bd{n}$ has only a twofold degeneracy $n_j=n_{j+1}$ the group $\Pi_{\bd{n}}$ consists of only two elements, $\Pi_{\bd{n}}=\{\mathds{1},\pi_{j,j+1}\}$. In the general case group $\Pi_{\bd{n}}$ is generated by transpositions $\pi_{j,j+1}$ for all $j$ such that $n_j=n_{j+1}$
\begin{equation}
\Pi_{\bd{n}}=\left\langle\pi_{j,j+1}:\ n_j=n_{j+1}\right\rangle.
\end{equation}
Let us denote by $D_k\circ \pi_{j,j+1}$ the reflection of GPC $D_k$ with respect to permutation $\pi_{j,j+1}$. Then, halfspace $\{\bd n:\ D_k\circ \pi_{j,j+1}(\bd n)\geq 0\}$ is the reflection of halfspace $\{\bd n:\ D_k(\bd n)\geq 0\}$ with respect to hyperplane given by equation $n_{j}=n_{j+1}$.
\begin{assumption}\label{assumption}
For every GPC, $D_k$, let $\bd n$ be a generic point that saturates $D_k$ and has degeneracy $n_j=n_{j+1}$. It is not possible to find vector $\bd{l}$ with the following properties (see Fig. \ref{fig:assumption}).  For $V_{\bd l}:=\{\bd{n'}:\ \bd{l}\cdotp \bd{n'}=0\}$:
\begin{enumerate}
\item $V_{\bd l}=\mathrm{Span}\left\{\bd{n}_{\bd i}:\ket{{\bd i}}\in\mathcal{B}_N\ \mathrm{and}\ \bd{l}\cdotp(\bd{n}_{\bd i}-\bd{n})=0\right\}$,
\item $\bd{l}$ is not proportional to $\pi_{j,j+1}(\bd{\kappa}_k)$ or $\bd{\kappa}_k$.
\item Set $\mathrm{Conv}\left\{\bd{n}_{\bd i}:\ket{{\bd i}}\in\mathcal{B}_N\ \mathrm{and}\ \bd{l}\cdotp(\bd{n}_{\bd i}-\bd{n})=0\right\}$ is contained in the region of admissible one-particle spectra in the vicinity of point $\bd n$.
\end{enumerate}
\end{assumption}
\begin{figure}[h]
\centering
\includegraphics[width=0.6\textwidth]{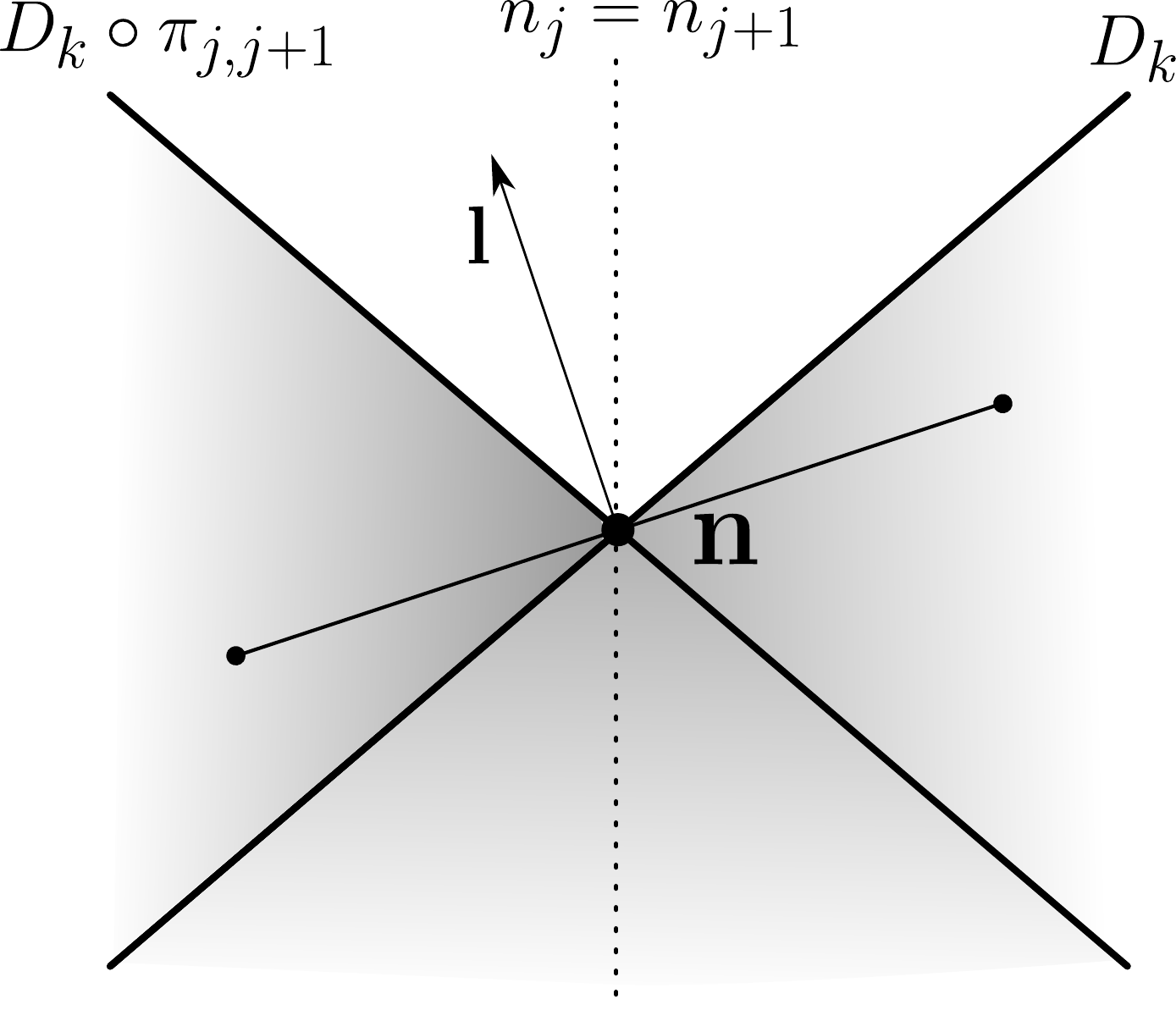}
\caption{Schematic supplementary figure for assumption \ref{assumption}. Solid line segment depicts convex hull of vertices of the Pauli hypercube that span space $V_{\bd l}$. Light grey color shows region where the spectral polytope and the reflected spectral polytope are contained.}
\label{fig:assumption}
\end{figure}
The validity of this assumption has been confirmed by us numerically in all cases where the GPCs are known, i.e.\ for $N\leq 5$ and $d\leq 11$. More specifically, for all GPCs we verified that generic $\bd n$ from the intersection of a given GPC with a number of hyperplanes $n_j=n_{j+1}$ satisfies assumption \ref{assumption}. Therefore, we have hardly any doubts that this holds in general. However, proving this fact in a straightforward way is a challenging combinatorial problem. Finally, let us point out that proving the validity of assumption \ref{assumption} for every GPC and every possible degeneracy would allow us to strengthen the universal ansatz theorem.
\begin{corollary}
Let $\bd n$ saturate multiple GPCs. Under assumption \ref{assumption}, theorem \ref{thm:maximal-ansatz} implies that there exists one of the saturated GPCs, $D_k$, such that for every state $\kpsi\in\mathcal{M}_{\bd n}$ there exists a transformation $u\in S_{\bd n}$ such that $u^{\otimes N}\kpsi\in\mathcal{A}_k$.
\end{corollary}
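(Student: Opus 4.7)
The plan is to extend Theorem \ref{thm:maximal-ansatz} from a single saturated GPC to the multi-GPC case by establishing the following key inclusion: under Assumption \ref{assumption}, the maximal ansatz $\mathcal{A}_{\bd n}^{(max)}$ must be contained in $\mathcal{A}_k$ for at least one of the saturated GPCs $D_k$. Once this inclusion holds, the corollary is immediate: for any $\kpsi\in\mathcal{M}_{\bd n}$, Theorem \ref{thm:maximal-ansatz} supplies $u\in S_{\bd n}$ with $\mathrm{Supp}_{\mathcal{B}_1}(u^{\otimes N}\kpsi)\subset\mathcal{A}_{\bd n}^{(max)}\subset\mathcal{A}_k$, which is equivalent to $u^{\otimes N}\kpsi\in\mathcal{A}_k$.

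To establish the key inclusion, I would fix a generic representative $\ket{\tilde\Psi}\in N_{max}$ expressed in an adapted NO basis $\mathcal{B}_1$. For each saturated GPC $D_{k_j}$, construct a witness state $\kpsi_j\in\mathcal{M}_{\bd n}\cap\mathcal{A}_{k_j}$, for instance as a superposition of configurations $\ket{\bd i}$ with $\bd i\in\mathcal{A}_{k_j}$ whose mixing coefficients are tuned so that the induced one-particle reduced density operator equals $\rho_1^{(\bd n)}$; such a superposition exists because $\bd n$ lies on the $D_{k_j}$-face of the spectral polytope and that face is the convex hull of the $\bd n_{\bd i}$ with $\bd i\in\mathcal{A}_{k_j}$. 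Each element of $\mathcal{A}_{k_j}$ is an eigenstate of $\hat k_j:=\sum_i\kappa_{k_j}^{(i)}\ket{i}\!\bra{i}$ with common eigenvalue $-\kappa_{k_j}^{(0)}$, so $\hat k_j\in\mk{s}_{\kpsi_j}$. Theorem \ref{stratification-properties}(ii) then yields a $v_j\in S_{\bd n}$ with $v_j\mk{s}_{\ket{\tilde\Psi}}v_j^\dagger\subset\mk{s}_{\kpsi_j}$.

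Next I would apply the dichotomy (i)/(ii) from the discussion preceding Assumption \ref{assumption} to each $D_{k_j}$ separately. If case (i) holds for some index $j_0$, that is, $\hat k_{j_0}\in v_{j_0}\mk{s}_{\ket{\tilde\Psi}}v_{j_0}^\dagger\cap i\mk{d}$, then Lemma \ref{stabiliser-selection} gives $\mathcal{A}_{\bd n}^{(max)}\subset\mathcal{A}_{k_{j_0}}$ and we are done with $k:=k_{j_0}$. Otherwise case (ii) occurs for every $j$ simultaneously, and so the maximal diagonal commutative subalgebra of $\mk{s}_{\ket{\tilde\Psi}}$ contains some $\hat l_{\mathcal{B}_1}$ whose coefficient vector $\bd l$ is proportional to none of the $\pi(\bd \kappa_{k_j})$ with $\pi\in\Pi_{\bd n}$. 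The hyperplane $V_{\bd l}$ would then be spanned by Pauli-hypercube vertices and, since $\mathcal{A}_{\bd n}^{(max)}$ sits inside the spectral polytope in a neighbourhood of $\bd n$, would satisfy all three clauses of Assumption \ref{assumption} relative to some saturated $D_{k_j}$ and some inherited degeneracy $n_j=n_{j+1}$, a contradiction.

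The main obstacle will be reconciling the multi-GPC hypothesis of the corollary with the single-GPC formulation of Assumption \ref{assumption}: the latter is phrased for a generic point on a single $D_k$-face, whereas here $\bd n$ lies at the intersection of several faces. The cleanest way to bridge this gap is to perturb $\bd n$ slightly along the $D_{k_j}$-face while leaving the other faces, apply the assumption at the perturbed point, and then pass to the limit---using the finiteness of the candidate hyperplane list to ensure the combinatorial obstruction survives. A secondary subtlety is the bookkeeping of the permutation group $\Pi_{\bd n}\subset S_{\bd n}$: one must verify via Remark \ref{obs:ansatz} that the freedom of conjugating by $v\in S_{\bd n}$ is enough to place $\hat k_{j_0}$ inside the commutative part of $\mk{s}_{\ket{\tilde\Psi}}$ whenever case (i) is actually available, so that case (ii) can be ruled out jointly for all $j$ rather than merely for a specific choice of representative.
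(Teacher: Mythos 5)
Your proposal is correct and takes essentially the route the paper intends: the corollary is stated without a separate proof precisely because it is meant to follow from the case (i)/(ii) dichotomy developed just before Assumption \ref{assumption}, and your reconstruction---witness states in $\mathcal{M}_{\bd n}\cap\mathcal{A}_{k_j}$ for each saturated constraint, Theorem \ref{stratification-properties}(ii) to compare stabilisers with a generic state, and Assumption \ref{assumption} to exclude case (ii) for all $j$ simultaneously so that case (i) must occur for some $k$---is faithful to that argument. The perturbation-and-limit device you add to reconcile the single-GPC phrasing of Assumption \ref{assumption} with a multiply pinned $\bd n$ is not what the paper does: the Appendix applies the assumption directly at $\bd n$ by taking the union of the regions $R^{(k,j)}_{\bd n}$ over \emph{all} saturated GPCs, i.e.\ the assumption is to be read at generic points of the intersection of the saturated faces with the degeneracy hyperplanes, so no limiting argument is needed.
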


\section{Global implications of extremal local quantum information}\label{sec:general}

In this section we will briefly discuss non-fermionic settings. The single-body quantum marginal problem appears naturally in quantum information theory, where systems of distinguishable particles are considered. The map $\mu$ assigns to a state $\ket\Psi$ its $1$-particle reduced density matrices. The solution of one-body quantum marginal problem is obtained {\it mutatis mutandis} to the case of fermions as all the essential mathematical structures are also present for distinguishable particles, i.e.\ $\mathbb{P}(\mathcal{H})$ is a K\"ahler manifold and $\mu$ arises as the momentum map of the local unitary action on the space of states (see
\cite{S18,MS18,TomekDoub,vergne2017inequalities,MS15,OSS14,SOK14,MOS13,ST13,HKS13,SWK13,walter2013entanglement,SOK12,SK11,SHK11} for more examples of the usage of geometric techniques in quantum information). A similar observation applies to bosons. In the following we fully characterise selection rules for the system of qubits and bosons.

We consider a multipartite quantum system consisting of $r$ distinguishable subsystems $S_1,\ldots, S_r$. Moreover, we
denote the reduced density operators by $\rho_{S_j}$ and assume that all respective local Hilbert spaces $\mathcal{H}_{S_j}$ have the same dimension $d$. The latter could always be achieved by embedding the possibly smaller dimensional $\mathcal{H}_{S_j}$ into larger spaces of the dimension $d\equiv \max_j\big(\mbox{dim}(\mathcal{H}_{S_j})\big)$. The quantum marginal constraints,
\begin{equation}\label{qmc}
D_k\big(\bd{n}_{S_1},\ldots, \bd{n}_{S_r}\big)\equiv \kappa_k^{(0)}+\sum_{j=1}^r \bd{\kappa}_k \bd{\cdot} \bd{n}_{S_j}\geq 0\,,
\end{equation}
represent the necessary and sufficient conditions for the compatibility of given non-increasingly ordered local spectra $\bd{n}_{S_j}\equiv \big(n_{S_j,i}\big)_{i=1}^d\equiv \mbox{spec}^\downarrow\!\left(\rho_{S_j}\right)$
to a pure total state.
In complete analogy to the case of identical fermions (as discussed in the previous sections), the saturation of a quantum marginal constraint \eqref{qmc} implies  a \emph{selection rule}. To be more specific, whenever $D\big(\bd{n}_{S_1},\ldots, \bd{n}_{S_r}\big)=0$ for some constraint $D\geq 0$, there exist at least one family of local bases $\mathcal{B}_{S_j}\equiv \{\ket{i}_{\!\mathcal{S}_j}\}_{i=1}^d$ of eigenstates of $\rho_{S_j}$, $j=1,2,\ldots,r$, such that the corresponding multipartite state $\ket{\Psi}$ fulfills
\begin{equation}\label{pin3}
\hat{D} \ket{\Psi} \equiv D\Big[\big(\hat{n}_{\mathcal{S}_1}^{(i)}\big)_{i=1}^d,\ldots,\big(\hat{n}_{\mathcal{S}_N}^{(i)}\big)_{i=1}^d\Big]\,\ket{\Psi}=0\,.
\end{equation}
Here, $\hat{n}_{\mathcal{S}_1}^{(i)}\equiv \ket{i}_{\!\mathcal{S}_1 \mathcal{S}_1}\!\bra{i}\otimes \mathbb{1}_{\mathcal{S}_2}\otimes \ldots \otimes \mathbb{1}_{\mathcal{S}_N}$ ($\hat{n}_{\mathcal{S}_j}^{(i)}$ is defined analogously for $j>1$), we suppressed the dependence of $\hat{D}$ on the local bases $\mathcal{B}_{S_j}$ of (possibly non-unique) eigenstates $\ket{i}_{\mathcal{S}_1}$ of $\rho_{S_j}$ and we recall that those local bases are unique as long as each of the local spectra $\bd{n}_{S_j}$ is non-degenerate. To work out the consequences of \eqref{pin3}, we express $\ket{\Psi}$ with respect to the tensor product states built from the local bases $\mathcal{B}_{S_1},\ldots, \mathcal{B}_{S_r}$,
\begin{equation}\label{Psiself}
\ket{\Psi} = \sum_{i_1,\ldots,i_r=1}^d c_{i_1,\ldots,i_r} \,\ket{i_1}_{\!\mathcal{S}_1}\!\otimes \ldots \otimes \ket{i_r}_{\!\mathcal{S}_r}\,.
\end{equation}
The expansion \eqref{Psiself} is the analogue of the natural orbital expansion of fermionic quantum states (recall eq.~(8) in Part I \cite{Pin1}).
Since the states $\ket{i_1}_{\!\mathcal{S}_1}\!\otimes \ldots \otimes \ket{i_r}_{\!\mathcal{S}_r}$ are the eigenstates of $\hat{D}$ with (integer) eigenvalues $D(\bd{n}_{S_1,\bd{i}},\ldots, \bd{n}_{S_r,\bd{i}})= \kappa^{(0)}+\sum_{j=1}^r \kappa^{(i_j)}$, \eqref{pin3} implies that only those configurations $\bd{i}\equiv (i_1,\ldots,i_r)$ may contribute to the self-consistent expansion \eqref{Psiself} whose unordered spectra $(\bd{n}_{S_1,\bd{i}},\ldots, \bd{n}_{S_r,\bd{i}})$,
\begin{equation}\label{NONconf}
\left(\bd{n}_{S_j,\bd{i}}\right)_k= \delta_{i_j,k}\,,
\end{equation}
lie on the hyperplane defined by $D\equiv 0$. In complete analogy to the case of fermions (recall Corollaries 7,11 in Part I \cite{Pin1}), a stronger selection rule may apply in case of degenerate spectra. In the following, we illustrate those structural implications of pinning for non-fermionic single-body quantum marginal problems.

\subsection{Examples}\label{subsec:examples}
\subsubsection{$r$ qubits}
One prominent example for a non-fermionic single-body quantum marginal problem is the one of $r$ qubits.
Their reduced density operators $\rho_{S_1},\ldots,\rho_{S_r}$ are compatible to a common $r$-qubit pure state $\ket{\Psi}$
if and only if their spectra $\bd{n}_{S_1},\ldots,\bd{n}_{S_r}$ fulfill the following polygonal inequalities \cite{Higushi}
\begin{equation}\label{Hig}
D_i(\bd{n}_{S_1},\ldots,\bd{n}_{S_r})\equiv -n_{S_i}^{(2)}+ \sum_{j\neq i} n_{S_j}^{(2)} \geq 0\,,
\end{equation}
for all $i$. Here, $n_{S_i}^{(2)}$ denotes the smaller eigenvalue of $\rho_{S_i}$ which fixes the spectrum of $\rho_{S_i} \equiv n_{S_i}^{(1)} \ket{1}_{\!\mathcal{S}_i \mathcal{S}_i}\!\bra{1}+n_{S_{i}}^{(2)} \ket{2}_{\!\mathcal{S}_i \mathcal{S}_i}\!\bra{2}$ via the normalization $\mbox{Tr}[\rho_{S_i}]=n_{S_i}^{(1)}+n_{S_i}^{(2)}=1$.

In case a quantum marginal constraint \eqref{Hig} is saturated, e.g., $D_1(\bd{n})=0$, there exist local bases $\mathcal{B}_{S_j}$ such that the corresponding  $r$-qubit  state $\ket{\Psi}$ lies according to \eqref{pin3} in the zero-eigenspace of the respective $\hat{D}_1$-operator,
\begin{equation}
\hat{D}_1\ket{\Psi} = 0\,.
\end{equation}
By expressing $\ket{\Psi}$ in the self-consistent expansion \eqref{Psiself}, we conclude that only the configurations $(1,\ldots,1)$ and $(2,2,1,1,\ldots)$, $(2,1,2,1,1,\ldots)$, \ldots, $(2,1,\ldots,1,2)$ may contribute to $\ket{\Psi}$. Hence, pinning by one quantum marginal constraint \eqref{Hig} would reduce the number of contributing configurations from $2^r$ to just $r$. It is also interesting that for qubits all the difficulties described in Section \ref{subsec:tangdeg} are not present. The rest of this section explains this phenomenon.

\begin{lemma}\label{nontrivial1}
If for every GPC $D_k$ and its reflection $D_k^\prime$ all the vertices $\bd{n_i}$ of the Pauli hypercube satisfy $D_k(\bd{n_i})\geq 0$ and $D_k^\prime(\bd{n_i})\geq 0$ then the selection rule is given by $\mathcal{A}_k$.
\end{lemma}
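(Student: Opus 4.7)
The plan is to exclude case (ii) of the dichotomy introduced in the paragraph preceding Assumption \ref{assumption}, thereby forcing case (i) and concluding $\mathcal{A}_{\bd n}^{(max)} \subset \mathcal{A}_k$, which by Theorem \ref{thm:maximal-ansatz} yields the desired selection rule for every $\kpsi\in \mathcal{M}_{\bd n}$ whose NONs saturate $D_k$.

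First, I would fix such a $\kpsi \in \mathcal{M}_{\bd n}$ and, without loss of generality, pass to a generic $\ket{\tilde\Psi}\in N_{max}\subset \mathcal{M}_{\bd n}$ equipped with an adapted NO basis $\mathcal{B}_1$. As recapped in the text, either (i) $\hat k_{\mathcal{B}_1}=\sum_j\kappa_k^{(j)}\ket{j}\!\bra{j}$ (up to conjugation by some $v\in S_{\bd n}$) already lies in the diagonal part $\mk{s}_{\ket{\tilde\Psi}}\cap i\mk{d}$, or (ii) that diagonal part contains an extra generator $\hat l_{\mathcal{B}_1}$ whose coefficient vector $\bd l$ is proportional to neither $\bd{\kappa}_k$ nor to any reflection $\pi_{j,j+1}(\bd{\kappa}_k)$ with $n_j=n_{j+1}$. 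My aim is to prove that the stated hypothesis rules out (ii).

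Next, I would translate case (ii) into the three conditions of Assumption \ref{assumption}. Condition (i) gives that the vertices $\bd n_{\bd i}$ satisfying $\bd l\cdot(\bd n_{\bd i}-\bd n)=0$ affinely span the hyperplane $V_{\bd l}$. Moreover, expanding $\ket{\tilde\Psi}$ in the adapted basis yields $\bd n=\sum_{\bd i}|c_{\bd i}|^2\bd n_{\bd i}$ with the sum ranging over $\mathrm{Supp}_{\mathcal{B}_1}(\ket{\tilde\Psi})$, and by Lemma \ref{stabiliser-selection} this support is contained in $\{\bd n_{\bd i}\in V_{\bd l}\}$. Hence $\bd n$ lies in the relative interior of $\mathrm{Conv}\{\bd n_{\bd i}\in V_{\bd l}\}$, so that this convex hull contains an open relative neighbourhood of $\bd n$ inside $V_{\bd l}$. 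Now I would exploit the hypothesis of the lemma: since every vertex of the Pauli hypercube satisfies $D_k\geq 0$ and $D_k'\geq 0$, every vertex on $V_{\bd l}$ also does, and hence $\mathrm{Conv}\{\bd n_{\bd i}\in V_{\bd l}\}\subset\{D_k\geq 0\}$. Thus an open relative neighbourhood of $\bd n$ inside $V_{\bd l}$ lies in $\{D_k\geq 0\}$, while $\bd n$ itself lies on the bounding affine hyperplane $\{D_k=0\}$. Elementary affine geometry then forces $V_{\bd l}\subseteq \{D_k=0\}$, hence equality. Running the same argument with $D_k'$ in place of $D_k$ covers the scenario where the new generator would correspond to a reflected GPC. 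In either case, the conclusion $V_{\bd l}=\{D_k=0\}$ or $V_{\bd l}=\{D_k'=0\}$ contradicts condition (ii) of Assumption \ref{assumption}, ruling out case (ii). Consequently case (i) must hold and the universal ansatz is contained in $\mathcal{A}_k$.

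The main obstacle I expect is the step that genuinely produces a relative open neighbourhood of $\bd n$ inside $V_{\bd l}$ from the spanning set of vertices on $V_{\bd l}$: one must certify not merely that these vertices affinely span $V_{\bd l}$ but that $\bd n$ lies in the relative interior of their convex hull, which I would establish from the convex combination $\bd n=\sum_{\bd i}|c_{\bd i}|^2\bd n_{\bd i}$ together with the spanning property. Once this is in hand, the remainder is a direct application of the hypothesis and the elementary fact that a hyperplane cannot stay locally on one side of another hyperplane passing through a common point without coinciding with it.
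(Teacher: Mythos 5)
The paper states this lemma without giving a proof (it is offered as a distillation of the discussion surrounding Assumption~\ref{assumption} and the appendix), so I can only judge your argument on its own merits. There is a genuine gap at exactly the step you flag. To conclude $V_{\bd l}\subseteq\{D_k=0\}$ you need a relatively \emph{open} neighbourhood of $\bd n$ inside $V_{\bd l}$ contained in $\{D_k\geq 0\}$, and for that you need the vertices carrying the positive weights $|c_{\bd i}|^2$ --- i.e.\ the support vertices, not merely all vertices lying on $V_{\bd l}$ --- to affinely span $V_{\bd l}$. Condition (i) of Assumption~\ref{assumption} is a property of the hypothetical offending $\bd l$ that must be \emph{verified}, not a given consequence of case~(ii); the support may well span a proper affine subspace $W\subsetneq V_{\bd l}$ (indeed it does whenever $\mk{s}_{\ket{\Psi}}\cap i\mk{d}$ contains further independent generators). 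In that situation your argument only yields $W\subseteq\{D_k=0\}$, which does not force $\bd l\propto\bd{\kappa}_k$ and hence produces no contradiction with condition~(ii). So the route ``exclude case (ii), then invoke Theorem~\ref{thm:maximal-ansatz}'' is not closed as written.

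The good news is that the lemma does not need any of this machinery: the hypothesis is strong enough that a two-line convexity argument, already implicit in your steps 3--4, gives the conclusion directly and for \emph{every} state and every NO basis. For any normalized $\kpsi\in\mathcal{M}_{\bd n}$ expanded in an NO basis one has $\bd{n}=\sum_{\bd i}|c_{\bd i}|^2\,\bd{n}_{\bd i}$, and since $D_k$ is affine,
\begin{equation*}
0=D_k(\bd n)=\sum_{\bd i\in\mathrm{Supp}_{\mathcal{B}_1}(\kpsi)}|c_{\bd i}|^2\,D_k(\bd{n}_{\bd i}).
\end{equation*}
If every vertex of the Pauli hypercube satisfies $D_k(\bd{n}_{\bd i})\geq0$, then every term is nonnegative and every weight is strictly positive, so $D_k(\bd{n}_{\bd i})=0$ for all $\bd i$ in the support, i.e.\ $\mathrm{Supp}_{\mathcal{B}_1}(\kpsi)\subset\mathcal{A}_k$. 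The same identity applied to $D_k'=D_k\circ\pi_{j,j+1}$ (using the hypothesis on the reflections) shows the conclusion is insensitive to the relabelling ambiguity of NOs within degenerate blocks, which is the only reason the reflections appear in the statement. I recommend replacing your contradiction argument by this direct computation; note also that the hypothesis of the lemma is genuinely restrictive (it fails, e.g., for the Borland--Dennis constraint), which is why the stratification apparatus is needed in general but not here.
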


We will show next that conditions of Lemma \ref{nontrivial1} are satisfied for $r$-qubit system.



\begin{lemma}\label{nontrivial}
Assume that $n_{S_l}^{(2)}=\frac{1}{2}$. Then for $k\neq l$ all the inequalities $\sum_{j\neq k}n_{S_j}^{(2)}-n_{S_k}^{(2)}\geq 0$ are automatically satisfied. The only nontrivial inequality is $\sum_{j\neq l}n_{S_j}^{(2)}\geq \frac{1}{2}$ and its saturation defines a codimension $2$ edge of $\mathcal{P}$.
\end{lemma}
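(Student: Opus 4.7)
The plan is a direct computation using the Higuchi inequalities \eqref{Hig} together with the elementary constraint $0 \leq n_{S_j}^{(2)} \leq 1/2$, which holds for every qubit since $n_{S_j}^{(2)}$ denotes the smaller of the two eigenvalues of $\rho_{S_j}$. The statement is essentially a case-by-case analysis depending on whether the index of the polygonal inequality coincides with $l$ or not.

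First I would treat the case $k \neq l$. Substituting $n_{S_l}^{(2)} = 1/2$ into $D_k$, I rewrite
\begin{equation}
D_k(\bd{n}_{S_1},\ldots,\bd{n}_{S_r}) = \tfrac{1}{2} + \sum_{j \neq k,\,j \neq l} n_{S_j}^{(2)} - n_{S_k}^{(2)}\,.
\end{equation}
Because every $n_{S_j}^{(2)}$ is nonnegative and $n_{S_k}^{(2)} \leq 1/2$, the right-hand side is manifestly $\geq 0$, which shows that $D_k$ is automatically satisfied. Next, for $k = l$, one simply has
\begin{equation}
D_l(\bd{n}_{S_1},\ldots,\bd{n}_{S_r}) = \sum_{j \neq l} n_{S_j}^{(2)} - \tfrac{1}{2}\,,
\end{equation}
so that $D_l \geq 0$ is equivalent to the announced nontrivial inequality, and its saturation amounts to the single additional linear equation $\sum_{j \neq l} n_{S_j}^{(2)} = 1/2$.

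It remains to verify the codimension statement. Viewing $\mathcal{P}$ inside the $r$-dimensional ambient space parametrized by $(n_{S_1}^{(2)},\ldots,n_{S_r}^{(2)})$, the face in question is cut out by the two linear equations $n_{S_l}^{(2)} = 1/2$ and $\sum_{j \neq l} n_{S_j}^{(2)} = 1/2$. These equations involve disjoint sets of coordinates, so they are linearly independent, and the corresponding face has codimension exactly $2$ in $\mathcal{P}$, as claimed.

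The argument is essentially a sharp bookkeeping exercise and no real obstacle is expected; the only subtle point is the implicit use of the qubit-specific bound $n_{S_j}^{(2)} \leq 1/2$ (which encodes the ordering $n_{S_j}^{(1)} \geq n_{S_j}^{(2)}$ together with normalization), without which the estimate for $D_k$ with $k \neq l$ would fail. This is precisely the structural feature that makes the $r$-qubit setting free of the complications analysed in Section~\ref{subsec:tangdeg}, as alluded to in the discussion preceding Lemma \ref{nontrivial1}.
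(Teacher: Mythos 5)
Your proof is correct. The paper in fact states Lemma \ref{nontrivial} without any proof, and your argument --- substituting $n_{S_l}^{(2)}=\tfrac{1}{2}$ into each Higuchi inequality, using the qubit-specific bounds $0\leq n_{S_j}^{(2)}\leq \tfrac{1}{2}$ to see that $D_k\geq \tfrac12 - n_{S_k}^{(2)}\geq 0$ for $k\neq l$, and noting that the two saturation equations involve disjoint sets of coordinates and hence cut out a codimension-$2$ face --- is precisely the direct verification the authors evidently regarded as immediate.
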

%

Using Lemma $\ref{nontrivial}$, for our further calculations, it is enough to consider one of the inequalities (\ref{Hig}), for example the one with $i=1$. The reflection of the hyperplane  $\sum_{j=2}^Ln_{S_j}^{(2)}-n_{S_1}^{(2)}= 0$ along $n_{S_1}^{(2)}=\frac{1}{2}$ gives $\sum_{j=2}^Ln_{S_j}^{(2)}+n_{S_1}^{(2)}= 1$. Using Lemma \ref{nontrivial1} we know that vertices of Pauli hypercube that can lead to non-standard selection rules are those that satisfy:
\begin{eqnarray}
\sum_{j=2}^Ln_{S_j}^{(2)}-n_{S_1}^{(2)}< 0,\,\,\mathrm{or}\label{w1}\\
\sum_{j=2}^Ln_{S_j}^{(2)}+n_{S_1}^{(2)}<1.\label{w2}
\end{eqnarray}


\begin{lemma}\label{weights}
The only vertex of the Pauli hypercube that satisfies (\ref{w1}) is $(1,0,\ldots,0)$ and the only vertex of the Pauli hypercube that satisfies (\ref{w2}) is $(0,\ldots,0)$.
\end{lemma}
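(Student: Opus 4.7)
The plan is to argue directly from integrality. At a vertex of the Pauli hypercube each coordinate $n_{S_j}^{(2)}$ is either $0$ or $1$, so both sides of the inequalities \eqref{w1} and \eqref{w2} are integers. This reduces each strict inequality to a combinatorial statement that can be read off by inspecting the only way a non-negative integer sum can be strictly less than a small bound.

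First I would address \eqref{w1}: rewrite it as $\sum_{j=2}^r n_{S_j}^{(2)} < n_{S_1}^{(2)}$. The right-hand side is at most $1$, and the left-hand side is a non-negative integer, so a strict inequality forces the right-hand side to equal $1$ and the left-hand side to equal $0$. This immediately gives $n_{S_1}^{(2)}=1$ and $n_{S_j}^{(2)}=0$ for all $j\ge 2$, i.e.\ the vertex $(1,0,\ldots,0)$.

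Next I would treat \eqref{w2} in exactly the same spirit: the sum $\sum_{j=1}^r n_{S_j}^{(2)}$ is a non-negative integer, and ``strictly less than $1$'' is equivalent to ``equal to $0$''. Since each summand is non-negative, every coordinate must vanish, forcing the vertex $(0,\ldots,0)$.

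There is no real obstacle here — the argument is a one-line integrality observation for each of the two cases. The only small point to make explicit is that the two inequalities \eqref{w1} and \eqref{w2} arose precisely as the reflections of the GPC along the hyperplane $n_{S_1}^{(2)}=\tfrac12$, so the two exceptional vertices $(1,0,\ldots,0)$ and $(0,\ldots,0)$ are exchanged by this reflection, which is consistent with the symmetric role they play in Lemma~\ref{nontrivial1}.
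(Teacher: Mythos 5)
Your proof is correct and rests on the same elementary observation as the paper's: at a vertex all coordinates are $0$ or $1$, so each strict inequality pins down the vertex uniquely (the paper phrases this by counting the number $k$ of zero coordinates and splitting on $n_{S_1}^{(2)}\in\{0,1\}$, which is the same case analysis in slightly different clothing). No gap; the integrality phrasing is if anything a touch cleaner.
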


\begin{proof}
It is easy to verify that both vertices indeed satisfy desired conditions. To show that these are complete assume that a vertex $w$ has $k$ zeros. There are two cases to consider for each inequality, i.e.\ $n_{S_1}^{(2)}=0$ or $n_{S_1}^{(2)}=1$. We start with inequality (\ref{w1}).

\begin{eqnarray}
n_{S_1}^{(2)}=0,\,\,\,L-k<0\label{w11}\\
n_{S_1}^{(2)}=1,\,\,\,L-k<2,\label{w12}
\end{eqnarray}
One can easily see that (\ref{w11}) is never satisfied (for $k\geq 0$) and the only $k>0$ that satisfies (\ref{w12}) is $k=L-1$ . Similarly for inequality (\ref{w2}) we get:

\begin{eqnarray}
n_{S_1}^{(2)}=0,\,\,\,L-k<1,\label{w21}\\
n_{S_1}^{(2)}=1,\,\,\,L-k<1,\label{w22}
\end{eqnarray}
One can easily see that (\ref{w22}) is never satisfied (for $k<L$) and (\ref{w21}) is satisfied only when $k=L$.

\end{proof}
Note that the vertex $(1,0,\ldots,0)$ is reflection of the vertex $(0,\ldots,0)$ along $n_{S_1}^{(2)}=\frac{1}{2}$. The conclusion from Lemma \ref{weights} is:
\begin{theorem}
For a system of $r$ qubits all selection rules are given by $\mathcal{A}_k$.
\end{theorem}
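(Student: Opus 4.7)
My plan is to assemble the three preceding lemmas into a short chain of implications that treats the $r$-qubit case uniformly. The overarching role of Lemma \ref{nontrivial1} is to furnish an abstract sufficient condition---no vertex of the Pauli hypercube violates either a given GPC $D_k$ or its reflection $D_k^\prime$---under which the selection rule collapses to the naive ansatz $\mathcal{A}_k$. The task therefore reduces to checking that the required combinatorial property is (effectively) met for every polygonal inequality \eqref{Hig} arising in the $r$-qubit polytope.

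First I would invoke Lemma \ref{nontrivial} to drastically shrink the set of GPCs that actually need to be analysed. At a point $\bd n$ with $n_{S_l}^{(2)} = 1/2$, all inequalities with $k \neq l$ are automatically satisfied and hence non-restrictive; the analysis collapses to a single non-trivial inequality---without loss of generality the one with $i = 1$---together with its reflection $D_1^\prime$ across the hyperplane $n_{S_1}^{(2)} = 1/2$. By the permutation symmetry between the qubits, confirming the theorem for this representative GPC is enough to conclude it in full generality.

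Next I would call on Lemma \ref{weights}, which yields a very tight enumeration of the potentially troublesome vertices: the only Pauli-hypercube vertices lying on the wrong side of either halfspace are $(1,0,\ldots,0)$ (violating $D_1 \geq 0$) and $(0,\ldots,0)$ (violating $D_1^\prime \geq 0$). Both lie on the boundary of the \emph{other} halfspace, and, decisively, they are exchanged by the very reflection that produces $D_1^\prime$ from $D_1$. Hence the two ``exceptional'' vertices together span no hyperplane other than the one containing them and the GPC's own boundary, so no spurious ansatz of the pathological type (ii) from the dichotomy preceding Assumption \ref{assumption} can arise. This is precisely the content that Lemma \ref{nontrivial1} requires, read modulo the reflection symmetry it tolerates, and its conclusion applies.

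The anticipated main obstacle is interpretive rather than computational: one must verify that the two exceptional vertices in Lemma \ref{weights}, being reflected images of one another, really are absorbed by the reflection structure built into Lemma \ref{nontrivial1}, so that the hypothesis is genuinely met once we work modulo the action of $\Pi_{\bd n}$. Once this symmetry bookkeeping is checked, the theorem reduces to a one-line composition of the three preceding lemmas, and the selection rule for every polygonal GPC of the $r$-qubit polytope is confirmed to be $\mathcal{A}_k$.
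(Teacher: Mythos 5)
Your proposal is correct and follows essentially the same route as the paper: reduce via Lemma \ref{nontrivial} to the single inequality $i=1$ and its reflection, use Lemma \ref{weights} to locate the only two vertices outside the doubly-nonnegative region, and feed this into Lemma \ref{nontrivial1}. If anything you are more careful than the paper's one-line proof, which asserts that \emph{all} vertices lie in the dark grey region even though Lemma \ref{weights} has just exhibited $(1,0,\ldots,0)$ and $(0,\ldots,0)$ as strictly violating $D_1\geq 0$ and $D_1'\geq 0$ respectively; your observation that these two vertices sit exactly on the opposite bounding hyperplane and are swapped by the reflection is the missing bookkeeping that makes the appeal to Lemma \ref{nontrivial1} legitimate.
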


\begin{proof}
By Lemma \ref{weights}, the light grey and white regions in Figure \ref{fig:example-weight} do not contain any vertices of the Pauli hypercube. Thus all the vertices satisfy conditions of lemma \ref{nontrivial1} (are in the dark grey region) and the result follows.
\end{proof}
\begin{figure}[h]
\centering
\includegraphics[width=0.6\textwidth]{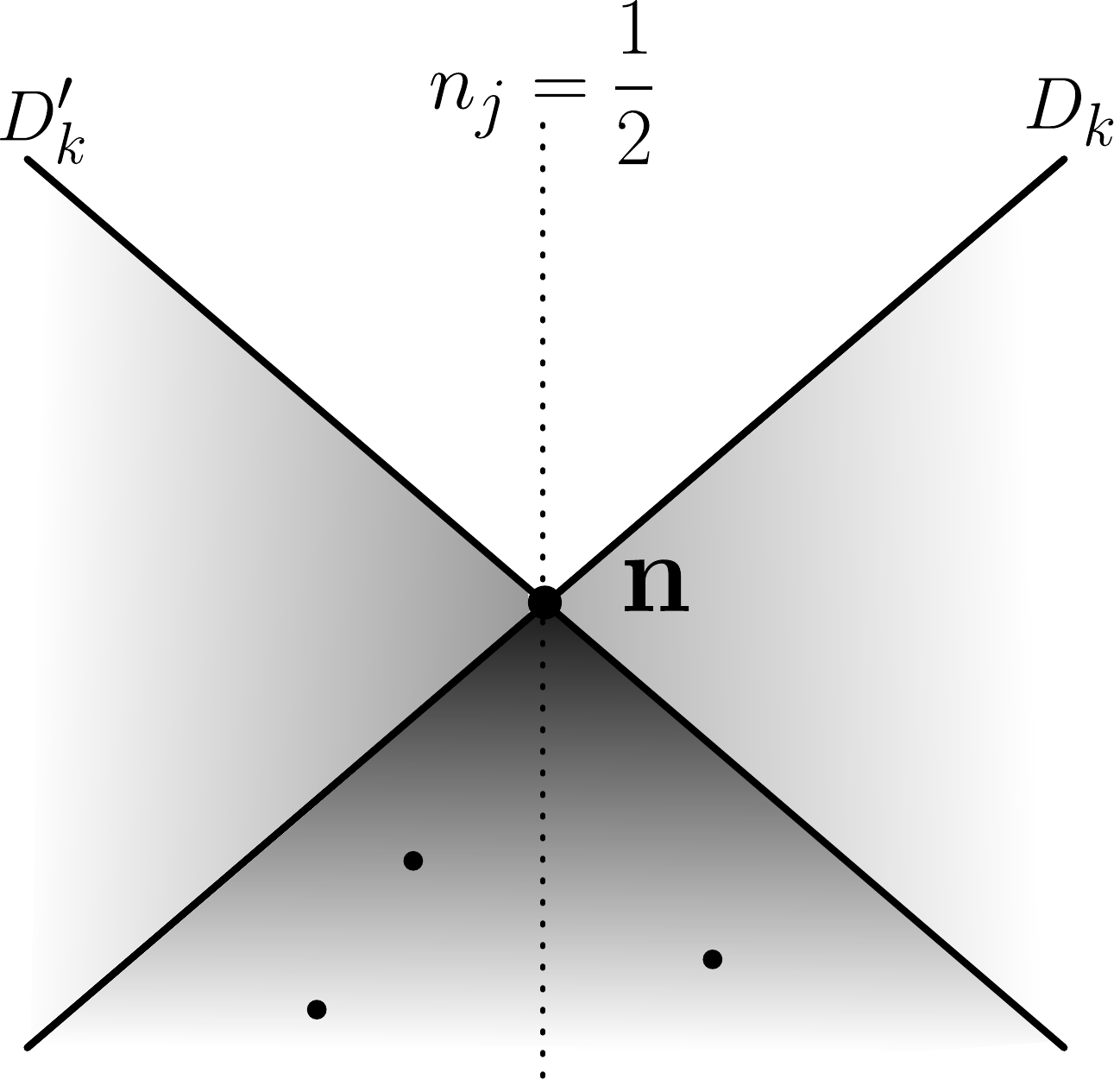}
\caption{Location of vertices of the Pauli hypercube for qubits. Dark grey region corresponds to $D_k(\bd{n_i})\geq 0$ and $D_k^\prime(\bd{n_i})\geq 0$.}
\label{fig:example-weight}
\end{figure}

\subsubsection{$N$ bosons}
In contrast to $N$ identical fermions, the single-body quantum marginal problem is trivial for $N$ identical bosons. Indeed, for any vector $\bd{n}\equiv (n_j)_{j=1}^d$ of NONs respecting the trivial constraints $n_j \geq 0$ and normalization $n_1+\ldots+n_d=N$, one can find a corresponding $N$-boson quantum state $\ket{\Psi}\in \mathcal{H}_N^{(b)}\equiv \mbox{Sym}^N[\Ho]$, e.g.,
\begin{equation}
\ket{\Psi}= \frac{1}{\sqrt{N}}\sum_{j=1}^d \sqrt{n_j} \,\ket{j}\otimes \ldots \otimes \ket{j}\,.
\end{equation}
Here, $\{\ket{j}\}$ are some orthonormal states. Nonetheless, the selection rule based on \eqref{pin3} applies also to the saturation of the trivial constraints $n_j\geq 0$, just implying that the respective natural orbital $\ket{j}$ does not contribute at all to the self-consistent expansion of $\ket{\Psi}$ in bosonic configurational states $\ket{i_1,\ldots,i_N}_b$ built up from the respective natural orbitals. If even all except one constraint are saturated, we have $n_1=N$ and thus the presence of a complete Bose-Einstein condensate, $\ket{\Psi}=\ket{1,\ldots,1}_b\equiv \ket{1}^{\otimes^N}$.

\subsubsection{$N$ hard-core bosons}
The bosonic single-body quantum marginal problem gets highly non-trivial if we restrict the $N$-boson Hilbert space $\mathcal{H}_N^{(b)}$ to the one of $N$ hard-core bosons. To be more specific, after introducing the orthonormal basis $\{\ket{\chi_j}\}_{j=1}^d$ of hard-core/lattice site states, $\mathcal{H}_N^{(b)}$ is restricted to $\mathcal{H}_N^{(hcb)}$ by skipping all configurations multiply occupying any of those lattice sites. This gives rise to a well-defined quantum marginal problem. It seeks the one-particle reduced density operators $\rho_1$ that are representing a quantum state in $\mathcal{H}_N^{(hcb)}$. Its complete solution would allow one to efficiently determine the ground state energies of all quantum systems of identical bosons which interact only by hard-core interaction (including all possible one-particle terms in the Hamiltonian). Unfortunately, the space $\mathcal{H}_N^{(hcb)}\leq \mathcal{H}_N^{(b)}$ is not invariant under rotations of the one-particle Hilbert space $\mathcal{H}_1$ and the formalism by Klyachko and its solution do not apply. In particular, the solution set of $N$-representable one-particle reduced density operators $\rho_1$ is described by conditions involving the natural orbitals as well \cite{FTthesis}. Although this set takes a less preferable form, one could try to find outer approximations to it in analogy to the fundamentally important two-body $N$-representability problem for fermions (see, e.g., Refs.~\cite{Coleman,Mazz04,Mazz12,Mazz16}). One prominent outer approximation is given by the exclusion principle analogue: The largest possible occupation number that can be found in a system of $N$ hard-core bosons on $d$ lattice sites is given by \cite{CSHCB}
\begin{equation}
n_1 \leq N_{max}\equiv \frac{N}{d}(d-N+1)\,.
\end{equation}
Saturation of this universal upper bound on the degree of condensation of hard-core bosons implies that there is one natural orbital, $\ket{1}$, of $\rho_1$ which is maximally unbiased  with respect to the lattice site basis $\{\ket{\chi_j}\}$, $\langle \chi_j \ket{1}=1/\sqrt{d}$ for all $j$, and the corresponding quantum state is given by (up to some phases which could be transformed away) is maximally delocalized,
$\ket{\Psi} \propto \sum_{i_1<i_2<\ldots <i_N} \ket{\chi_{i_1},\ldots,\chi_{i_N}}_b$ (see Ref.~\cite{CSHCB} for more details).

\section{Summary}\label{sec:conl}
Extension of our results to general non-fermionic multipartite quantum systems reveals that extremal single-body information has always strong implications for the multipartite quantum state. In that sense, we confirm that pinned quantum systems define new physical entities since their response to adiabatic external perturbations has to be restricted to the corresponding polytope facet. Our approach also establishes a beautiful link between representation theory, geometry and the extremal single-body information. In our work we distinguish two scenarios. The first one concerns nondegenrate NONs and the main result is Theorem  \ref{nondeg}. The proof of this theorem is based on the fundamental property of the momentum map which relates the image $\im d\mu_{\ket{\Psi}}$ with the Lie algebra of the stabiliser of $\ket{\Psi}$, i.e. ${\mk s}_{\ket{\Psi}}$ (see Lemma \ref{thm:image}). The relative simplicity of this case is due to the fact that $\mathfrak{s}_{\mu(\ket{\Psi})}$ is diagonal and hence $\mathfrak{s}_{\ket\Psi}$ is diagonal too. For degenerate NONs Lemma \ref{thm:image} is just the first step and the procedure is much more complicated. In order to deal with degenerate NONs we first introduce the notion of adapted NOs (see Definition 5). This allows us to formulate Lemma 7 which charaterizes $\im\ d\mu_{\ket{\Psi}}\cap \mk{d}$. In contrast to nondegenerate case we cannot, however, use this lemma to immediately formulate a selection rule. In order to deduce the support of $\kpsi$ from lemma \ref{lemma:image-general} and the knowledge of its NONs, we have to take a closer look at the subtle structure of local symmetries of states with fixed NONs. The main tool we use is the idea that fibres of the momentum map are stratified symplectic spaces. This lead us to additional technical Assumption \ref{assumption} which, as we prove in Section \ref{subsec:examples} is always satisfied for many qubit systems, as well as, for all fermionic systems with $N\leq 5$ and $d\leq 11$. Under this additional assumption we formulate a selection rules for degenerate NONs in theorem \ref{thm:maximal-ansatz}.  Finally, we conjecture that Assumption \ref{assumption} is satisfied for any multipartite system. Proving this is a challenging combinatorial problem which we leave open.

It is worth to mention that the approach taken in this paper cannot be easily extended to the overlapping quantum marginal problem. This is due to the fact that the corresponding symmetry groups that give rise to the momentum map and thus to the reduced density matrices do not commute.

\section*{Acknowledgments}
We acknowledge financial support from National Science Centre, Poland under the grant SONATA BIS: 2015/18/E/ST1/00200 (AS),
the Excellence Initiative of the German Federal and State Governments (Grants ZUK 43 \& 81) (DG \& AL), and the DFG (project B01 of CRC 183) (DG), the UK Engineering and Physical Sciences Research Council (Grant EP/P007155/1) and the German Research Foundation (Grant SCHI 1476/1-1) (CS).

\appendix
\section{Checking the validity of Assumption \ref{assumption}}
In this section we shall describe an efficient combinatorial procedure for checking the validity of assumption \ref{assumption}. The naive approach would amount to realising a loop going through all subsets of vertices of the Pauli hypercube that span a hyperplane and checking whether the convex hull of all vertices contained in such a hyperplane contains vector $\bd n$. However, in practice, it turns out that a much stronger condition is satisfied. We will explain it by exploring the geometry of the spectral polytope around point $\bd n$. To this end, for each saturated GPC and each generator of group $\Pi_{\bd{n}}$ we define the following region (see Fig.\ref{fig:appendix})
\begin{eqnarray}
\fl\nonumber R^{(k,j)}_{\bd n}:=\{\bd{n}':\ D_k(\bd{n}')>0\mathrm{\ and}\ D_k\circ \pi_{j,j+1}(\bd{n}')<0\}\cup\\\cup \{\bd{n}':\ D_k(\bd{n}')<0\mathrm{\ and}\ D_k\circ \pi_{j,j+1}(\bd{n}')>0\}.
\end{eqnarray}
\begin{figure}[h]
\centering
\includegraphics[width=0.6\textwidth]{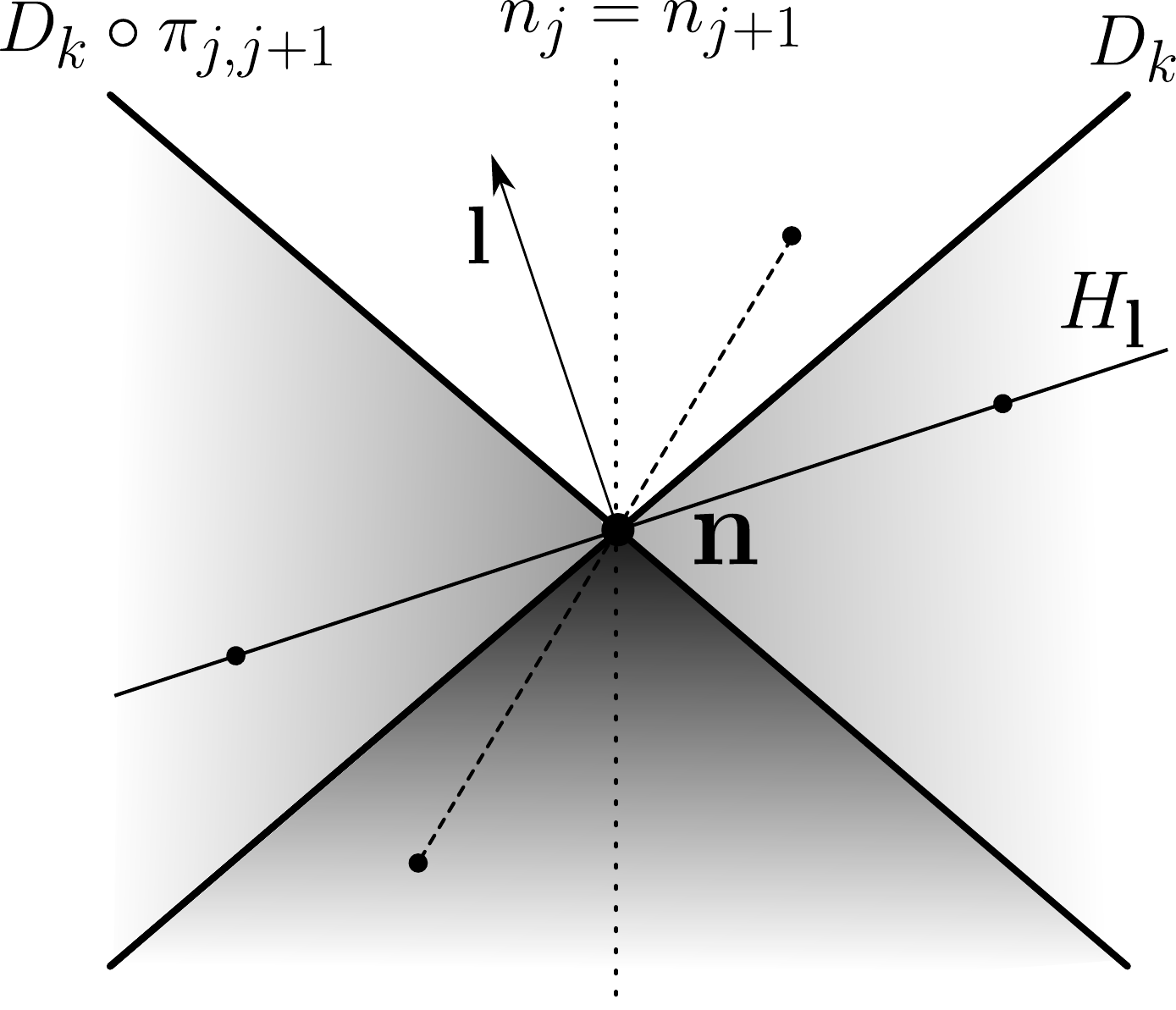}
\caption{Region $R^{(k,j)}_{\bd n}$ is marked with light grey. Dark grey area corresponds to the region where the presence of vertices of the Pauli hypercube is allowed. This is because they span hyperplanes that are not fully contained in the region of admissible one-particle spectra around point $\bd n$ (dashed line segment).}
\label{fig:appendix}
\end{figure}

Hyperplane $H_{\bd{l}}:=\{\bd{n}':\ \bd{l}\cdotp (\bd{n}'-\bd{n})=0\}$ (see Fig.\ref{fig:appendix}) that would lead to an ansatz which is different than ansatzes $\{A_k\}$ coming from the saturated GPCs, necessarily has to be contained in region
\begin{eqnarray}
R_{\bd n}:=\bigcup_{k:\ D_k(\bd n)=0}\bigcup_{j:\ \pi_{j,j+1}\in \Pi_{\bd{n}}}R^{(k,j)}_{\bd n}
\end{eqnarray}
To check validity of assumption \ref{assumption}, for each $R^{(k,j)}_{\bd n}$ we check numerically that point $\bd{n}$ is not contained in the convex hull of vertices of the Pauli hypercube that are contained in chosen $R^{(k,j)}_{\bd n}$. This implies that it is not possible to find a hyperplane which is contained in region $R_{\bd n}$ and is spanned by vertices of the Pauli hypercube.
\vspace{2.0cm}

\bibliographystyle{unsrt}
\bibliography{pin}

\end{document}